\newtheorem{theorem}{Theorem}[section]
\newtheorem{lemma}[theorem]{Lemma}
\newtheorem{proposition}[theorem]{Proposition}
\newtheorem{definition}[theorem]{Definition}
\newtheorem{remark}[theorem]{Remark}
\newtheorem{corollary}[theorem]{Corollary}
\def\la{\left\langle}
\def\ra{\right\rangle}
\def\square{\hbox{\vrule\vbox{\hrule\phantom{o}\hrule}\vrule}}
\def\re{{\rm Re}}
\def\R{\mathbb {R}}
\def\C{\mathbb {C}}
\def\N{\mathbb {N}}
\def\re{\mathop{\rm Re}\nolimits}
\def\im{\mathop{\rm Im}\nolimits}
\def\la{\langle}
\def\ra{\rangle}
\def\cH{\mathcal {H}}
\def\cD{\mathcal {D}}
\def\cG{\mathcal {G}}
\def\cF{\mathcal {F}}
\def\cO{\mathcal {O}}
\def\cW{\mathcal {W}}
\def\cM{\mathcal {M}}
\def\cR{\mathcal {R}}
\newcommand{\be}{\begin{equation}}
\newcommand{\ee}{\end{equation}}
\newcommand{\beee}{\begin{eqnarray*}}
\newcommand{\eeee}{\end{eqnarray*}}
\newcommand{\bee}{\begin{eqnarray}}
\newcommand{\eee}{\end{eqnarray}}
\numberwithin{equation}{section}
\begin{document}
\title[Resonances in the Born-Oppenheimer approximation]{On the Born-Oppenheimer approximation of diatomic molecular resonances}

\newcommand{\printdate}{\@date}

\author {A. Martinez${}^1$\, \&\, V. Sordoni${}^2$}

\maketitle

\addtocounter{footnote}{1}
\footnotetext{{\tt\small Universit\`a di Bologna,  
Dipartimento di Matematica, Piazza di Porta San Donato, 40127
Bologna, Italy, 
vania.sordoni@unibo.it }} 
\addtocounter{footnote}{1}
\footnotetext{{\tt\small Universit\`a di Bologna,  
Dipartimento di Matematica, Piazza di Porta San Donato, 40127
Bologna, Italy, 
andre.martinez@unibo.it }}  

\begin{abstract}
We give a new reduction of a general diatomic molecular Hamiltonian, without modifying it near the collision set of nuclei. The resulting effective Hamiltonian is the sum of a smooth semiclassical pseudodifferential operator (the semiclassical parameter being the inverse of the square-root of the nuclear mass), and a semibounded operator localised in the elliptic region corresponding to the nuclear collision set. We also study its behaviour on exponential weights, and give several applications where molecular resonances appear and can be well located.
\end{abstract}
\vskip 4cm
{\it Keywords:} Resonances; Born-Oppenheimer approximation; Effective Hamiltonian
\vskip 0.5cm
{\it Subject classifications:} 35P15; 35C20; 35S99; 47A75.
\pagebreak

\section{Introduction}

This paper is devoted to the Born-Oppenheimer reduction of a diatomic molecular Hamiltonian, near energy levels where resonances may appear.

The principle of the Born-Oppenheimer reduction goes back to 1927 with the work \cite{BoOp}, where the fact that the nuclei are much heavier than the electrons is exploited in order to approximate the complete molecular Schr\"odinger operator by a reduced Hamiltonian, acting on the positions of the nuclei only, and in which the electrons are involved through the effective electric potential they create only. This principle has been widely used by chemists since that period, but the mathematically rigorous justifications of this reduction are much more recent. They started with \cite{CDS}, where it has been justified up to error-terms of order $h^2:=M_{nucl}^{-1}$ ($M_{nucl}$ being the average mass of the nuclei), and continued with \cite{Ha1, Ma1} (up to ${\mathcal O}(h^\infty)$, but concerning smooth interactions only), \cite{Ha2} (up to ${\mathcal O}(h^\infty)$ for bounded states of diatomic molecules), \cite{KMSW} (exact reduction for bounded states of polyatomic molecules), \cite{Ha3, Ha4, MaSo1, MaSo2} (up to ${\mathcal O}(h^\infty)$ for the quantum evolution problem). 
Let us also mention results on the Born-Oppenheimer reduction for the scattering process \cite{Ra, KMW1, KMW2}.

Concerning the reduction for resonant (or metastable) states with singular  (Coulomb-type) interactions, to our knowledge it is treated in \cite{MaMe} only. In that paper, following ideas from \cite{KMSW}, a regularisation of the Hamiltonian is constructed far from the collision set of the nuclei, and this gives rise to an effective Hamiltonian of pseudodifferential type. However, this effective Hamiltonian is not constructed for the exact molecular Hamiltonian, but for a modified one where the singularity coming from the collision set of the nuclei has been artificially removed. Because of this, additional assumptions have to be done in order to be able to compare the resonances obtained from the effective Hamiltonian to those of the original Hamiltonian.

Indeed, the main inconvenient  of \cite{MaMe} lies in some precise assumption (see condition (ii) of Proposition 6.1 in \cite{MaMe}) that involves the unmodified operator directly, and appears to be difficult to verify in general.

Here, we  construct an effective Hamiltonian for the unmodified molecular Schr\"odinger operator, in such a way that the contribution of the collision set of the nuclei is clearly individuated and separated from its complementary. The resulting effective Hamiltonian appears to be the sum of a semiclassical pseudodifferential operator (where, as usual, the semiclassical parameter is the inverse of the square-root of the nuclear mass), and a semibounded operator localised near the collision set of the nuclei. Thanks to this localisation it is possible to apply a general technique (originated in \cite{HeSj2}) in order to compare the resonances of the full effective Hamiltonian to those of its pseudodifferential part.

In the next section, we prove an abstract result of reduction, similar to the Feshbach standard result, but with the difference that the nuclei position-space is split in two parts, each of them giving rise to separated contributions in the final effective Hamiltonian (see Theorem \ref{abstractth}). In Section \ref{sect3}, we apply this result to the particular case of a diatomic molecular Hamiltonian, with Coulomb singularities. Then, in Section \ref{sect4}, we give a representation of the effective Hamiltonian in terms of a matrix operator, that can be split into a smooth pseudodifferential part and an operator localised into the elliptic region (see Theorem \ref{Thpseudo}). In addition, the conjugation of these operators by an exponential weight-function is studied, too. This is used in Section \ref{sect5} to investigate their action on WKB solutions. In Section \ref{sectellipt}, a general method (taken from \cite{HeSj2}) is described, in order to compare the resonances of the full effective Hamiltonian to those of its pseudodifferential part. Finally, in Section \ref{sect7}, we give a list of applications where our result, together with other techniques, permits to locate molecular resonances and give estimates on their widths.

\section{An abstract splitting result}
\label{sect2}
Let $\cH_Q$ be a Hilbert space. We consider a general (unbounded) closed operator $P$ with domain $\cD_P$ on $\cH:=L^2(\R^n)\otimes\cH_Q$, that can be written on the form,
$$
P=K_0\otimes I +\int^{\oplus}_{\R^n} Q(x) dx=:K_0\otimes I +Q,
$$
where $K_0$ is a closed operator on $L^2(\R^n)$ with dense domain $\cD_0$, such that,
\be
\label{K0pos}
\re K_0 \geq 0,
\ee 
and, for almost all $x\in \R^n$, $Q(x)$ is a lower semi-bounded closed operator on $\cH_Q$, with lower semi-bound $-C_0$ and dense domain $\cD_Q$, both independent of $x$. In particular, one has,
$$
\cD_P = \cD_0\otimes\cD_Q.
$$
We are interested in the spectrum of $P$ near some real interval $I_0:=(-\infty, \lambda_0]$,  and we assume the existence of two open subsets $\cW_0$ and $ \cW_1$ of $\R^n$, together with a continuous family of projections $(\Pi (x))_{x\in\R^n}$ on $\cH_Q$ and some $\delta_0\geq \delta_1>0$, such that,
\be
\label{assabstr}
\begin{split}
&\cW_0 \cup \cW_1 =\R^n;\\
&\re Q(x) \geq \lambda_0 + \delta_0\quad a.\, e. \, \mbox{ on } \cW_0;\\
&\Pi (x)\, :\, \cD_Q \to \cD_Q;\\
&[Q(x),\Pi (x)]=0 \quad a.\, e. \, \mbox{ on } \cW_1;\\
&\re (Q(x)-\lambda_0- \delta_1)(1-\Pi(x)) \geq 0\quad a.\, e. \, \mbox{ on } \cW_1.
\end{split}
\ee
We set
$$
\Pi := \int_{\R^n}^{\oplus}\Pi (x) dx\quad ;\quad \cH_{\rm red}:= \Pi (\cH)\subset \cH,
$$
 Our aim is to reduce the spectral study of $P$ near $I_0$, to that of an operator acting on the `reduced' space $ \cH_{\rm red}$ (as in the standard Feshbach method), but in such a way that the contributions of $Q_0:=\int^{\oplus}_{\cW_0} Q(x) dx$ and $Q_1:=\int^{\oplus}_{\cW_1} Q(x) dx$ are clearly individuated and separated (the idea is that, in the applications, $Q_1$ is a ``smooth'' operator, while $Q_0$ is ``singular'' but elliptic).

Let $\varphi_0, \varphi_1, \psi_0,\psi_1\in C^\infty (\R^n, [0,1])$ be cut-off functions such that,
\be
\label{partfunct}
\begin{split}
& {\rm Supp}\, \varphi_j \cup  {\rm Supp}\, \psi_j \subset \cW_j\quad (j=0,1);\\
& \varphi_0^2 +\varphi_1^2 =1\, \mbox{ on }\, \R^n;\\
& \psi_j =1\, \mbox{ on }\,{\rm Supp}\, \varphi_j\quad (j=0,1).
\end{split}
\ee
We set,
\be
\label{opmod}
\begin{aligned}
& P_0=K_0+Q_0:= K_0 + Q + (\lambda_0+\delta_0+C_0)(1-\psi_0);\\
& P_1=K_0+Q_1:= K_0 + Q\psi_1 + (\lambda_0+\delta_0)(1-\psi_1);\\
&\hat\Pi := I -\Pi.
\end{aligned}
\ee
In particular, one has $\varphi_j P_j= \varphi_j P$ and $P_j\varphi_j = P\varphi_j$ for $j=0,1$. Moreover, by construction, one also has,
\be
\begin{split}
&\re P_0 \geq \lambda_0+\delta_0;\\
&\re \hat\Pi (P_1-\lambda_0-\delta_1) \hat\Pi\geq 0,
\end{split}
\ee
and thus, for $z$ in a small enough complex neighborhood of $I_0$, both $P_0-z$ and the restriction of $\hat\Pi P_1 \hat\Pi-z$ to the range of $\hat\Pi$ are invertible, with bounded inverse.
We set,
\be
\label{defbasic}
\begin{split}
&X_0=X_0(z):=\hat\Pi(P_0-z)^{-1}\hat\Pi;\\
&X_1=X_1(z):=\hat\Pi \left( \hat\Pi(P_1-z)\hat\Pi\right)^{-1}\hat\Pi;\\
& M_j:= [P_j,\Pi]\quad (j=0,1);\\
& T_j:= [K_0,\varphi_j]\quad (j=0,1);\\
& Y=Y(z):= \varphi_0X_0T_0 +\varphi_1X_1T_1;\\
& Y'=Y'(z):= T_0X_0\varphi_0 +T_1X_1\varphi_1;\\
& Y_1:=\varphi_0X_0M_0\varphi_0+\varphi_1X_1M_1\varphi_1;\\
& Y_2:=\varphi_0M_0X_0M_0\varphi_0+\varphi_1M_1X_1M_1\varphi_1;\\
& Y_3:=\varphi_0M_0X_0T_0+\varphi_1M_1X_1T_1;\\
& Y_4:=\varphi_0M_0\varphi_0+\varphi_1M_1\varphi_1.
\end{split}
\ee
Observe that $Y$ and $Y'$ are bounded operators on $\cH$ and, in the applications, they will actually be very small. Our result is,
\begin{theorem}\sl 
\label{abstractth}
Assume $||Y(z)||<1$ and $||Y'(z)||<1$. Then, for $z$ in a small enough complex neighborhood of $I_0$, one has the equivalence,
$$
z\in \sigma (P) \Longleftrightarrow 0\in \sigma (A(z)),
$$
where,
$$
A(z)=\Pi \left(z-P  \right)\Pi +B\, :\, \cH_{\rm red}\cap \cD_P \to \cH_{\rm red},
$$
with,
$$
B(z):=\Pi(-Y_2+(M_0+Y_3-Y_4)(1+Y)^{-1}(1-Y_1))\Pi.
$$
\end{theorem}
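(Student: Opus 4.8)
The plan is to mimic the Grushin/Feshbach reduction, but carried out separately over the two regions $\cW_0$ and $\cW_1$ by means of the partition $\varphi_0^2+\varphi_1^2=1$. First I would set up the auxiliary operator on the "non-reduced" subspace: on $\hat\Pi\cH$ one wants to invert $\hat\Pi(P-z)\hat\Pi$, and the point of the construction \eqref{opmod}--\eqref{defbasic} is that this inverse can be glued together from $X_0$ and $X_1$, which come from the elliptic operator $P_0$ and the smooth operator $P_1$ respectively. Concretely, I would define a candidate parametrix $E(z):=\varphi_0 X_0\varphi_0+\varphi_1 X_1\varphi_1$ (possibly restricted/conjugated by $\hat\Pi$) and compute $\hat\Pi(P-z)\hat\Pi\, E$. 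Using $\varphi_j P=\varphi_j P_j$, $P_j\varphi_j=P\varphi_j$, the commutator identities $[P,\varphi_j]=[K_0,\varphi_j]=T_j$ (since $Q$ is a fibered multiplication operator and commutes with $\varphi_j$), and $\varphi_0 T_0+\varphi_1 T_1=\tfrac12\sum_j[K_0,\varphi_j^2]=0$, one finds that the error is governed exactly by the operators $Y,Y'$ defined in \eqref{defbasic}. Hence under the hypotheses $\|Y\|<1$, $\|Y'\|<1$ the true inverse $\hat\Pi\big(\hat\Pi(P-z)\hat\Pi\big)^{-1}\hat\Pi$ exists and is expressed through Neumann series in $Y$ (resp. $Y'$) composed with $E$; the invertibility of $P_0-z$ and of $\hat\Pi P_1\hat\Pi-z$ near $I_0$ was already established in the excerpt.

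Next I would write the standard $2\times2$ Grushin matrix
\[
\mathcal P(z)=\begin{pmatrix}P-z & \hat\Pi\\ \Pi & 0\end{pmatrix}:(\cD_P\cap\cH)\oplus\hat\Pi\cH\ \to\ \cH\oplus\Pi\cH .
\]
Using the just-constructed right inverse of $\hat\Pi(P-z)\hat\Pi$ one checks $\mathcal P(z)$ is invertible, with the lower-right entry of $\mathcal P(z)^{-1}$ equal to $-\big(\Pi(P-z)\Pi-\Pi(P-z)\hat\Pi\,[\hat\Pi(P-z)\hat\Pi]^{-1}\,\hat\Pi(P-z)\Pi\big)^{-1}$. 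By the usual Schur-complement argument, $z\in\sigma(P)$ iff this effective operator is not invertible on $\cH_{\rm red}$, i.e. iff $0\in\sigma(A(z))$ where $A(z)=\Pi(z-P)\Pi+B(z)$ and
\[
B(z)=\Pi(P-z)\hat\Pi\,\big[\hat\Pi(P-z)\hat\Pi\big]^{-1}\,\hat\Pi(P-z)\Pi .
\]
The remaining task is purely algebraic: expand $(P-z)\hat\Pi=(P-z)(I-\Pi)$, push $\Pi$ through using $M_j=[P_j,\Pi]$, insert the Neumann-series expression for $[\hat\Pi(P-z)\hat\Pi]^{-1}$ in terms of $X_0,X_1,Y,Y_1$, and collect the cross terms into the operators $Y_2,Y_3,Y_4,M_0$. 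One should find precisely
\[
B(z)=\Pi\big(-Y_2+(M_0+Y_3-Y_4)(1+Y)^{-1}(1-Y_1)\big)\Pi ,
\]
which is the asserted formula.

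The main obstacle is the bookkeeping in the last step: one must handle the fact that $P_0$ and $P_1$ differ from $P$ only where $\psi_j\neq1$, so that all the identities $\varphi_j P=\varphi_j P_j$ are available but must be applied on the correct side, and one must track where a factor $\Pi$ or $\hat\Pi$ sits while commuting it past $\varphi_j$ (these commute, $\varphi_j$ being scalar) versus past $P_j$ (producing an $M_j$). A secondary point requiring care is domain/closedness: one must verify that $E(z)$ and its corrections map into $\cD_P$ and that all compositions are well defined as bounded operators between the relevant spaces, using $\re P_0\ge\lambda_0+\delta_0$ and $\re\hat\Pi(P_1-\lambda_0-\delta_1)\hat\Pi\ge0$ to control $X_0,X_1$ near $I_0$. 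Once the parametrix identities are in place, the equivalence $z\in\sigma(P)\Leftrightarrow 0\in\sigma(A(z))$ follows from the invertibility of the Grushin matrix by a routine argument.
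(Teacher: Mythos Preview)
Your strategy is in the right spirit but differs from the paper's in one structural respect, and this difference is exactly where your outline becomes vague.

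The paper does \emph{not} first invert $\hat\Pi(P-z)\hat\Pi$ and then run a standard Schur complement. Instead it sets up the Grushin problem
\[
\cG(z)=\begin{pmatrix} P-z & I\\ \Pi & 0\end{pmatrix}
\]
(with the inclusion $\cH_{\rm red}\hookrightarrow\cH$ in the upper-right slot, not $\hat\Pi$), writes down the \emph{full} $2\times2$ inverses $\cG_j(z)^{-1}$ of the corresponding local problems for $P_j$, and glues those: $\cF=\varphi_0\cG_0^{-1}\varphi_0+\varphi_1\cG_1^{-1}\varphi_1$. One then computes $\cF\cG$ and $\cG\cF$ directly; the products turn out to be lower- and upper-triangular with $I+Y$ and $I+Y'$ on the diagonal, so the hypotheses $\|Y\|,\|Y'\|<1$ make $\cG$ invertible. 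The operator $A(z)$ is then simply the $(2,2)$ entry of $\cG^{-1}=(\cF\cG)^{-1}\cF$, and the stated formula for $B(z)$ is read off from this product with no further Feshbach manipulation.

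Your route---glue only the $X_j$ to build a parametrix $E$ for $\hat\Pi(P-z)\hat\Pi$, invert by Neumann series, then feed into the Schur complement $B=\Pi(P-z)\hat\Pi[\hat\Pi(P-z)\hat\Pi]^{-1}\hat\Pi(P-z)\Pi$---is in principle workable, but the ``purely algebraic'' step you defer is where the two approaches really diverge. The Neumann inverse you obtain lives on $\hat\Pi\cH$ and involves $(I+\hat\Pi Y\hat\Pi)^{-1}$ rather than $(1+Y)^{-1}$, and the off-diagonal pieces $\Pi(P-z)\hat\Pi$, $\hat\Pi(P-z)\Pi$ must still be localised via $\varphi_0^2+\varphi_1^2=1$ and expressed through $M_0,M_1$. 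Arriving at precisely $\Pi(-Y_2+(M_0+Y_3-Y_4)(1+Y)^{-1}(1-Y_1))\Pi$ from that starting point is not routine bookkeeping; the asymmetry between $M_0$ and $M_1$ and the appearance of the \emph{unrestricted} $(1+Y)^{-1}$ come naturally only because the paper glued the whole $2\times2$ inverse, whose off-diagonal blocks $I-X_jM_j$ and $\Pi(1+M_jX_j)$ already encode the cross terms.

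In short: your plan would establish the spectral equivalence, but to obtain the specific formula for $B(z)$ you should glue at the level of the $2\times2$ Grushin inverses, not just the $X_j$.
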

\begin{proof}
For $z\in\C$ near $I_0$, we consider the Grushin problem,
\be
\label{grushin}
\cG(z):=\left(\begin{matrix}
P-z & I \\
\Pi & 0
\end{matrix}\right)\, :\, \cD_P\oplus \cH_{\rm red} \to \cH\oplus \cH_{\rm red}.
\ee
We also consider,
\be
\cG_j(z):=\left(\begin{matrix}
P_j-z & I \\
\Pi & 0
\end{matrix}\right)\quad\quad (j=0,1).
\ee
It is straightforward to check that $\cG_j(z)$ ($j=0,1$) invertible, with  inverse given by,
\be
\cG_j(z)^{-1}:=\left(\begin{matrix}
X_j & I-X_jM_j \\
\Pi(1+M_jX_j) & \Pi(z-P_j-M_jX_jM_j)
\end{matrix}\right).
\ee
Setting $\cF(z) := \varphi_0\cG_0^{-1}\varphi_0 + \varphi_1\cG_1^{-1}\varphi_1$ and $U:=\varphi_0X_0\varphi_0+\varphi_1X_1\varphi_1$, we find,
\be
\cF(z):=\left(\begin{matrix}
U & 1-Y_1 \\
\Pi(1 + Y_1') &\Pi (z-P-Y_2)
\end{matrix}\right),
\ee
with,
$$
Y_1':=\varphi_0M_0X_0\varphi_0+\varphi_1M_1X_1\varphi_1,
$$
and then, using that $\varphi_0[P,\Pi]=\varphi_0M_0$, $\varphi_1[P,\Pi]=\varphi_1M_1$, $U\Pi=0$,  $Y_1'\Pi =0$, and $Y_4\hat\Pi =\Pi Y_4$,
$$
\cF(z)\cG(z):=\left(\begin{matrix}
I+Y & 0 \\
G_1& I
\end{matrix}\right),
$$
with,
$$
G_1:=-\Pi M_0-Y_3+\Pi Y_4.
$$
Therefore, the operator,
\be
\label{invG}
\left(\begin{matrix}
1+Y & 0\\
G_1 & I
\end{matrix}\right)^{-1}\cF(z)=\left(\begin{matrix}
(I+Y)^{-1} & 0 \\
-G_1(I+Y)^{-1} & I
\end{matrix}\right)\cF(z)
\ee
is a left-inverse for $\cG(z)$.

In the same way, using that $\Pi Y_2\Pi =Y_2$, we also find,
$$
\cG(z)\cF(z):=\left(\begin{matrix}
1+Y' & G_2 \\
0 & I
\end{matrix}\right),
$$
with,
$$
G_2:=M_0-T_0X_0M_0\varphi_0-T_1X_1M_1\varphi_1-Y_4\Pi.
$$

This proves that $\cG(z)$ is surjective, too, and thus invertible with inverse given by (\ref{invG}). Moreover, if $A(z)$ stands for the coefficient $(2,2)$ of  $\cG(z)^{-1}$, one has the standard algebraic property,
$$
z\in\sigma (P) \Longleftrightarrow 0\in\sigma (A(z)).
$$
By definition, we also have $A(z)=\Pi (z-P-Y_2)-G_1(1+Y)^{-1}(1-Y_1)$,
and the result follows.
\end{proof}
\begin{remark}\sl
If we neglect all the terms involving $Y$, $M_1=[K_0,\Pi]$, $T_1$ or $T_2$ (that will all be small in the applications we have in mind), we see that the operator $A(z)$  reduces to its principal part $A_0(z)$, given by,
$$
A_0(z):= \Pi (z-P-RX_0R)\Pi,
$$
with $R:=[Q,\Pi] =\varphi_0[Q,\Pi]\varphi_0$.
\end{remark}

\section{Diatomic molecular resonances}
\label{sect3}

\subsection{The model}
\label{sect3.1}

We consider the selfadjoint  operator $H$ on $L^2(\R^3_x\times \R^{3p}_y)$, with domain $H^2(\R^3_x\times \R^{3p}_y)$ defined as 
\be
\label{eq:def} 
\begin{split} 
&H=-h^2\Delta+ H_{\rm el}(x)\\
&H_{\rm el}(x)=\tilde H_{\rm el}(x)+\frac{\alpha}{\vert x\vert}\\
&\tilde H_{\rm el}(x)=-\Delta_y+V(x,y)
\end{split}
\ee
with 
$$V(x,y)=\sum_{j=1}^n\left(\frac{\alpha_j^+}{\vert y_j+x\vert}+\frac{\alpha_j^-}{\vert y_j-x\vert}\right)+
\sum_{j,k=1}^n\frac{\alpha_{jk}}{\vert y_j-y_k\vert}$$
where $\alpha, \alpha_j^{\pm}$ and $\alpha_{jk}$ are real constant and $\alpha>0$, $\alpha_j^{\pm}<0$.

In this model, $x$ stands for the relative position of the nuclei, $y$ for the position of the electrons, and $h^2$ for the ratio between the electronic and nuclear masses. In particular, $H_{\rm el}$ is the electronic Hamiltonian with electronic mass normalized at $m=\frac12$.
 
Let us define the resonances of $P$ by using the analytic distortion introduced in  \cite{Hu}.\\
Let $\omega\, :\, \R^3 \rightarrow \R^3$ be a smooth  odd vector field  such that 
\begin{itemize}
\item $\omega(x) = 0$  for $|x|\leq R$ ($R>0$ large enough);
\item $\omega(x) = x$  for $|x| >>1$,
\item For any rotation $\cR$ on $\R^3$, one has  $\omega (\cR x) =\cR \omega (x)$.
\end{itemize}
(In other words, we take $\omega (x)$ on the form $\omega (x) = \chi (|x|)x$ with $\chi (t) =0$ when $t\leq R$, and $\chi (t) =1$ for $t>>1$.) For $\mu\in\R$ small enough, we consider the transformation
$$F_\mu(x,y)=(x+\mu \omega(x),y_1+\mu \omega(y_1),\dots,y_p+\mu \omega(y_p) )$$
and the analytic distorsion  $U_{\mu}$ associated to $F_\mu$  
defined as 
$$
U_{\mu}\phi(x,y)=\phi(F_\mu(x,y) ).
$$
Then the family,
$$H_\mu=U_{\mu}H U_{\mu}^{-1}$$
can be extended to  small complex values of $\mu$, and  we can give the following definition:
\begin{definition}A complex number $\rho $ is a resonance of $H$ if $\re \rho>\inf\sigma_{ess}(H)$ and there exists $\mu $ small enough, with $\im \mu>0$, such that $\rho\in\sigma_{disc}(H_\mu)$.
\end{definition}
\begin{remark}\sl
\label{rmmubar}
Using the self-adjointness of $H$, one can prove that this is also equivalent to $\overline \rho \in \sigma_{disc}(H_{\overline\mu})$.
\end{remark}
In the following we denote by $\Gamma (H)$ the set of such resonances.

\subsection{General assumptions and reduction}
\label{sect3.2}

We assume that, for some fixed $\lambda_0\in\R$, and for all  $x\in \R^3$, one has,
\be
\sigma(\tilde H_{\rm el}(x))\cap(-\infty, \lambda_0]\quad {\rm is \;discrete}.
\ee
Moreover, we also assume the existence of some finite $m\geq 1$, such that
\be
 \#\,\sigma(\tilde H_{\rm el}(x))\,\cap\,]-\infty, \lambda_0]\leq N.
 \ee
Let us denote by 
$$\tilde\lambda_1(x)<\tilde\lambda_2(x)\leq \dots\leq\tilde \lambda_N(x)$$ the first $N$ eigenvalues of $H_{\rm el}(x)$ and assume there exists a gap between them and the rest of the spectrum of $H_{\rm el}(x)$, that is, there exists some $\delta>0$, such that, 
\be
\inf_{x\in\R^3}{\rm dist}\left(\sigma(\tilde H_{\rm el}(x)\setminus\{\tilde\lambda_1(x), \dots, \tilde\lambda_N(x)\}, \{\tilde\lambda_1(x), \dots, \tilde\lambda_N(x)\}\right)\geq \delta
\ee
This fact implies that the spectral projection $\Pi_{\rm el}(x)$ of $\tilde H_{\rm el}(x)$ associated to $\{\tilde\lambda_1(x), \dots, \tilde\lambda_N(x)\}$ is $C^2$ with respect to $x\in \R^3$ (see \cite{CoSe, CDS}).

In the following, we set
$$\lambda_j(x):=\tilde\lambda_j(x)+\frac{\alpha}{\vert x\vert}, \quad\quad j=1,\dots,N.$$
Since $\alpha_{\pm}<0$ then there exists $C>0$ such that
$$\lambda_N(x)\leq C+\frac{\alpha}{\vert x\vert}.$$

For $x\not=0$, we set
$$\tilde H_{\rm el}^\mu(x)=U_\mu \tilde H_{\rm el}(x+\mu\omega(x))U_\mu^{-1}$$
and
$$H_{\rm el}^\mu(x)=\tilde H_{\rm el}^\mu(x)+\frac{\alpha}{\vert x+\mu\omega(x)\vert}.$$

By Lemma 2.1 in \cite{MaMe}, we also know that there exists $C_1>0$ such that, for all $x\not=0$ 
$$\lambda_1(x)\geq \frac{\alpha}{\vert x\vert }-C_1$$

For $x\in \R^n$, let $\gamma (x)$ be a continuous family of simple loops of $\C$, enclosing $\{\tilde \lambda_j(x)\, ;\, j=1,\dots,N\}$
and having the rests of $\sigma(\tilde H_{\rm el}(x))$ in its exterior.

By the gap condition, we may assume that 
$$\min_{x\in \R^3}{\rm dist}( \gamma(x), \sigma(\tilde H_{\rm el}(x)))\geq \frac{\delta}{2}>0$$
Moreover, $\gamma(x)$ can be taken in some fix compact set of $\C$.\\
Thanks to Lemma 2.3 of \cite{MaMe}, if $\mu\in\C$ is small enough, then for any $x\in \R^3$ and $z\in \gamma(x)$, the operator 
$(z-\tilde H_{\rm el}^\mu(x))^{-1}$ exists and satisfies
$$(z-\tilde H_{\rm el}^\mu(x))^{-1}-(z-\tilde H_{\rm el}(x))^{-1}=\mathcal O(\vert\mu\vert)$$
uniformly.
Then,  for $\mu\in \C$ sufficiently small, we  can define,
$$\Pi_{\rm el}^\mu(x)=\int_{\gamma(x)}(z-\tilde H_{\rm el}^\mu(x))^{-1}\;dx.$$

At that point, we fix $\mu =i\mu'$ with $\mu' >0$ small enough, and we apply Theorem \ref{abstractth} with,
\begin{itemize}
\item $\cH_Q:= L^2(\R^p)$;
\item $K_0:= -h^2 U_\mu \Delta U_\mu^{-1}=[(1+\mu{}^td\omega(x))^{-1}hD_x]^2$;
\item $Q(x):=H_{\rm el}^\mu(x)$;
\item $\cD_Q:= H^2(\R^p)$;
\item $P:=H_\mu$;
\item $\cD_P:= H^2(\R^{3+p})$;
\item $\cW_0:=\{ |x|< 2\delta_1\}$ with $\delta_1>0$ arbitrarily small;
\item $\cW_1:=\{ |x|>\delta_1\}$;
\item $\Pi(x)= \Pi_{\rm el}^\mu(x)$.
\end{itemize}
We observe that all the properties (\ref{assabstr}) are satisfied (with, indeed, $\delta_0$ arbitrarily large), and in addition, endowing $H^s(\R^3)$ with the semiclassical norm $|| u||_{H^s} :=||h^{-n/2}\la \xi \ra^s \hat u(\xi /h)||_{L^2}$ (where $\hat u$ stands for the usual Fourier transform),
we see that,
$$
X_0, X_1 =\cO (1)\, :\, H^{-1}(\R^3 ; L^2(\R^p))\to H^1(\R^3 ; L^2(\R^p)),
$$
and thus, we have,
\be
Y, Y'=\cO(h)\, :\, H^{-1}(\R^3 ; L^2(\R^p)) \to L^2(\R^3 ; L^2(\R^p)).
\ee
Moreover, using the fact that, for any function $f$, one has,
$$
\left((I+\mu {}^td\omega (x))^{-1}\nabla_x -(I+\mu {}^td\omega (y))^{-1}\nabla_y\right)f(x+\mu \omega (x) -y-\mu \omega (y))= 0,
$$
we see that we can use the same argument as in \cite{CoSe, CDS}, and conclude that $\Pi_{\rm el}^\mu (x)$ is $C^2$ with respect to $x\in\R^3$. As a consequence, and since $K_0$ is a 0-th order semiclassical differential operator of degree 2 with respect to $x$, and $[Q,\Pi]=0$ everywhere, we also have,
\be
\begin{split}
&M_0=M_1=\cO (h) \, :\, L^2(\R^3 ; L^2(\R^p)) \to H^{-1}(\R^3 ; L^2(\R^p));\\
&M_0=M_1=\cO (h) \, :\, H^1(\R^3 ; L^2(\R^p)) \to L^2(\R^3 ; L^2(\R^p)).
\end{split}
\ee
We also deduce that $||Y_2||+||Y_3|| =\cO(h^2)$, and since $\Pi M_0\Pi =0$ and $\Pi Y_4\Pi =0$,
in that case Theorem \ref{abstractth} becomes,
\begin{theorem}\sl
For $h>0$ small enough and $z$ in a small enough complex neighborhood of $(-\infty, \lambda_0]$, one has the equivalence,
$$
z\in \Gamma(H) \Longleftrightarrow 0\in \sigma (A_\mu(z)),
$$
where,
$$
A_\mu(z)=\Pi_{\rm el}^\mu \left(z-H_\mu  \right)\Pi_{\rm el}^\mu +B_\mu(z)\, :\, \cH_{\rm red}\cap \cD_P \to \cH_{\rm red},
$$
with,
$$
B_\mu(z)=\Pi_{\rm el}^\mu\left(-Y_2+(M_0+Y_3-Y_4)(1+Y)^{-1}(1-Y_1)\right)\Pi_{\rm el}^\mu=\cO(h^2).
$$
\end{theorem}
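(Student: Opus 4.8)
The claim is Theorem~\ref{abstractth} applied to the concrete data listed just before the statement ($\cH_Q=L^2(\R^p)$, $K_0=-h^2U_\mu\Delta U_\mu^{-1}$, $Q(x)=H_{\rm el}^\mu(x)$, $\Pi(x)=\Pi_{\rm el}^\mu(x)$, $\cW_0=\{|x|<2\delta_1\}$, $\cW_1=\{|x|>\delta_1\}$, etc.), together with the standard identification, near $(-\infty,\lambda_0]$, of the spectrum of the distorted operator $H_\mu$ with the resonance set $\Gamma(H)$. The plan is thus: (i) verify the hypotheses of Theorem~\ref{abstractth} for this data; (ii) read off the resulting equivalence and estimate $B_\mu$; (iii) translate $z\in\sigma(H_\mu)$ into $z\in\Gamma(H)$.

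For (i), I would first check the structural assumptions \eqref{K0pos}--\eqref{assabstr}. The operator $H_\mu$ is of the required direct-integral form $K_0\otimes I+\int^{\oplus}H_{\rm el}^\mu(x)\,dx$; $K_0$ is an $\cO(\mu')$-perturbation of $-h^2\Delta_x$, so $\re K_0\ge 0$ (modulo an $\cO(h^2)$ shift of $\lambda_0$, which is harmless for the final statement); and each $H_{\rm el}^\mu(x)$ is lower semibounded, uniformly in $x$, on the fixed domain $H^2(\R^p)$. As for \eqref{assabstr}: $\cW_0\cup\cW_1=\R^3$ is immediate; on $\cW_0$ one has $\re H_{\rm el}^\mu(x)\ge\re(\alpha/|x+\mu\omega(x)|)-C\ge\alpha/(2\delta_1)-C$ by Lemma~2.1 of \cite{MaMe}, hence $\ge\lambda_0+\delta_0$ once $\delta_1$ is small enough, with $\delta_0$ arbitrarily large; $\Pi_{\rm el}^\mu(x)$ preserves $H^2(\R^p)$ and commutes with $\tilde H_{\rm el}^\mu(x)$ by the Riesz-projection construction, hence also with $H_{\rm el}^\mu(x)$ (the Coulomb term being a scalar), everywhere; and on $\cW_1$ the inequality $\re(H_{\rm el}^\mu(x)-\lambda_0-\delta_1)(1-\Pi_{\rm el}^\mu(x))\ge 0$ follows from the uniform gap condition and the distortion resolvent estimate (Lemma~2.3 of \cite{MaMe}), the range of $1-\Pi_{\rm el}^\mu(x)$ lying in the spectral subspace of $\tilde H_{\rm el}^\mu(x)$ sitting outside $\gamma(x)$. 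This last point --- a uniform spectral lower bound over all of $\cW_1$ after distortion --- is where I expect the main work to lie. Finally, $\|Y(z)\|,\|Y'(z)\|<1$ for $h$ small, since $X_0,X_1=\cO(1)$ on the relevant semiclassical Sobolev spaces (as recorded before the statement), whence $Y,Y'=\cO(h)$.

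For (ii), Theorem~\ref{abstractth} then applies verbatim and gives, for $z$ in a small complex neighborhood of $(-\infty,\lambda_0]$, the equivalence $z\in\sigma(H_\mu)\iff 0\in\sigma(A(z))$ with $A(z)=\Pi_{\rm el}^\mu(z-H_\mu)\Pi_{\rm el}^\mu+B(z)$ and $B$ as in the statement; set $A_\mu:=A$ and $B_\mu:=B$. To obtain $B_\mu=\cO(h^2)$ I would invoke the facts collected before the statement: multiplication by the $x$-only cut-offs commutes with $\Pi_{\rm el}^\mu$ and $[Q,\Pi_{\rm el}^\mu]=0$, so that $M_0=M_1=[K_0,\Pi_{\rm el}^\mu]=\cO(h)$ (via the $C^2$ dependence of $\Pi_{\rm el}^\mu(x)$ on $x$, i.e. the commutation identity of the excerpt together with \cite{CoSe,CDS}), $Y_4=\cO(h)$, $\|Y_2\|+\|Y_3\|=\cO(h^2)$, $(1+Y)^{-1}=I+\cO(h)$, and --- crucially --- $\Pi_{\rm el}^\mu M_0\Pi_{\rm el}^\mu=\Pi_{\rm el}^\mu[K_0,\Pi_{\rm el}^\mu]\Pi_{\rm el}^\mu=0$ together with $\Pi_{\rm el}^\mu Y_4\Pi_{\rm el}^\mu=0$. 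Writing $(1+Y)^{-1}(1-Y_1)=I+\cO(h)$ inside $B_\mu$, the $M_0$- and $Y_4$-contributions become $\Pi_{\rm el}^\mu M_0\,\cO(h)\,\Pi_{\rm el}^\mu$ and $\Pi_{\rm el}^\mu Y_4\,\cO(h)\,\Pi_{\rm el}^\mu$, each $\cO(h^2)$ by the vanishing of the leading term, while the $Y_2$- and $Y_3$-terms are $\cO(h^2)$ directly; hence $B_\mu=\cO(h^2)$.

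For (iii), by the theory of the analytic distortion $U_\mu$ (see \cite{Hu}), for $\mu=i\mu'$ with $\mu'>0$ small the essential spectrum of $H_\mu$ stays outside a complex neighborhood of $(-\infty,\lambda_0]$, so there $\sigma(H_\mu)=\sigma_{disc}(H_\mu)$; this discrete set is independent of such $\mu$ and, by definition of $\Gamma(H)$, coincides with the resonance set (below $\inf\sigma_{ess}(H)$ it merely also records the eigenvalues of $H$, which is harmless). Combining (i)--(iii) yields $z\in\Gamma(H)\iff 0\in\sigma(A_\mu(z))$ together with $B_\mu(z)=\cO(h^2)$, as claimed. The one step requiring real analytic work is the uniform verification of the last two lines of \eqref{assabstr} on $\cW_1$; everything else is either structural or already quantified in the material preceding the statement.
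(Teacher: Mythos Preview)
Your proposal is correct and follows essentially the same route as the paper: apply Theorem~\ref{abstractth} to the listed data after checking \eqref{K0pos}--\eqref{assabstr} and the smallness of $Y,Y'$, then use $\Pi_{\rm el}^\mu M_0\Pi_{\rm el}^\mu=\Pi_{\rm el}^\mu Y_4\Pi_{\rm el}^\mu=0$ together with $M_0=M_1=\cO(h)$ and $Y_2,Y_3=\cO(h^2)$ to get $B_\mu=\cO(h^2)$. The paper omits your step~(iii) (identifying $\sigma_{disc}(H_\mu)$ with $\Gamma(H)$ via \cite{Hu}) but this is implicit in the setup of Section~\ref{sect3.1}; your explicit treatment of it is fine.
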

\begin{remark}\sl
Using Remark \ref{rmmubar}, we see that this is also equivalent to: $0\in \sigma(A_{\overline\mu}(\overline z))$.
\end{remark}
\begin{remark}\sl
In particular, the principal part of $A_\mu (z)$ is given by,
$$
A_\mu^0 (z):= \Pi_{\rm el}^\mu\left(z-H_\mu \right)\Pi_{\rm el}^\mu=U_\mu \Pi_{\rm el}\left(z-H \right)\Pi_{\rm el}U_\mu^{-1}.
$$
\end{remark}

\section{Smooth representation of the effective Hamiltonian}
\label{sect4}

Now, we are interested in the structure  of the effective Hamiltonian $A_\mu (z)$, and in particular in its possible representation as a semiclassical pseudodifferential operator, at least away from $x=0$. In view of the study of its action on  WKB-type functions, we also study the conjugated operators $e^{s/h}A_\mu (z)e^{-s/h}$ for convenient functions $s=s(x)$.

Since $\Pi(x)$ depend continuously on $x\in\R^3$ and one can find $m$ continuous section $v_1(x),\dots, v_N(x)$ generating ${\rm Ran}\Pi(x)$ and we can also assume that they form an orthonormal family.
Moreover, one can easily check that
$\tilde\lambda_1(x),\tilde\lambda_2(x),\dots,\tilde \lambda_N(x)$  depend on $\vert x\vert$ only,
and can be reindexed in such a way that each of the them depends analytically on $x\not=0$.\\
The arguments of \cite{MaMe} show that one can construct  
 finite family of bounded open sets $(\Omega_j)_{0\leq j\leq J}$ in $\R^n$, with $\Omega_0\subset\{\psi_1=0\}$, and  a corresponding family of unitary operators $U_j(x)$ ($j=0,\cdots ,J$; $x\in \Omega_j$),  with $U_0=1$ such
that (denoting by $U_j$ the unitary operator on $L^2(\Omega_j;{L^2(\R^{3p})})\simeq L^2(\Omega_j)\otimes L^2(\R^{3p})$ induced by the action of $U_j(x)$ on ${L^2(\R^{3p})}$),
\begin{itemize}
\item $\R^{3p} =\cup_{j=0}^J\Omega_j$;
\item For all $j=0,\cdots, J$ and $x\in  \Omega_j$, $U_j(x)$ leaves $H^2(\R^{3p})$ invariant;
\item For all $j$, the operator 
$U_j (-h^2\Delta_x)U_j^{-1}$
 is a semiclassical differential operator with operator-valued symbol, of the form,
$$
 U_j  (-h^2\Delta_x)U_j^{-1} = (-h^2\Delta_x)+h\sum_{|\beta|\leq 1}\omega_{\beta ,j}(x;h)(hD_x)^\beta,
$$
 where $\omega_{\beta ,j}(-\Delta_y+1)^{\frac{|\beta|}{2} -1}\in C^\infty (\Omega_j; {\mathcal  L}(L^2(\R^{3p})))$ for any $\gamma\in\N^n$ (here, ${\mathcal  L}(L^2(\R^{3p}))$ stands for the Banach space of bounded operators on $L^2(\R^{3p})$), and the quantity  $\Vert\partial_x^\gamma \omega_{\beta ,j}(x;h) (-\Delta_y+1)^{\frac{|\beta|}{2} -1}\Vert_{{\mathcal  L}(L^2(\R^{3p}))} $ is bounded uniformly with respect to $h$ small enough and locally uniformly with respect to $x\in\Omega_j$;
\item For all $j$, the operators $U_j(x)Q(x)\psi_1U_j(x)^{-1}$ and $U_j(x)(-\Delta_y + 1)U_j(x)^{-1}$ are in $C^\infty (\Omega_j; {\mathcal  L}(H^2(\R^{3p}),L^2(\R^{3p}))$ ;
\end{itemize}
In particular, following the terminology of \cite {MaSo2}, one can check that  the corresponding operator $P_1$ (defined as in (\ref{opmod}))  is a twisted pseudodifferential operator on $\R^3$ associated with $(\Omega_j, U_j)_{j=0,\dots, J}$.

Moreover, if we also assume that the $\tilde \lambda_j$'s are non degenerate and separated at infinity, in the sense that there exists $C>0$ such that,
\be
\label{sepationinf}
\inf_{j\not= k}\vert \tilde\lambda_j(x)-\tilde\lambda_k(x)\vert\geq \frac{1}{C}, \quad{\rm for }\;\vert x\vert\geq C,
\ee
then, by Proposition 5.1 of \cite{MaMe}, for $\mu\in \C$ small enough, there exist $m$ functions $w_{k,\mu}(x,y)\in C^0(\R^2; H^2(\R^{3p}))$, $k=1,\dots N$, depending analytically on $\mu$ near 0,  such that 
\be
\label {wkmutwist}
\begin{split}
& \langle w_{k,\mu},  w_{\ell,\bar\mu}\rangle_{L^2(\R^{3p})}=\delta_{k,\ell};\\
& \mbox{For } x\in \cW_1, \,( w_{k,\mu})_{1\leq k\leq N} \mbox{ form a basis of } {\rm Ran}\Pi_{\rm el}^\mu(x);\\
&  w_{k,\mu}\in C^{\infty}( \cW_0, H^2(\R^{3p});\\
& \mbox{For }  \vert x\vert  \mbox{ large enough, }  w_{k,\mu} \mbox{ is an eigenfunction of } Q_\mu(x)\\
\hskip 2cm & \mbox{ associated with }\lambda_k(x+\mu\omega(x));\\
& \mbox{For } j=1,\dots, J,\,
U_j(x) w_{k,\mu}\in C_b^{\infty}(\Omega_j, H^2(\R^{3p}).
\end{split}
\ee
For $u\in L^2(\R^{n+p})$ and $x\in\R^n$, we set,
$$
\widetilde\Pi_{\rm el}^\mu(x)u:=\sum_{k=1}^N \langle u,  w_{k,\bar\mu}\rangle_{L^2(\R^{3p})} w_{k,\mu}.
$$
In particular, for $x\in \cW_1$, one has $\widetilde\Pi_{\rm el}^\mu(x)=\Pi_{\rm el}^\mu(x)$, and we also observe that, if $\delta_1$ has been chosen small enough (in the definition of $\cW_0$), then, for $x\in\cW_0\backslash \{0\}$, one has $\re Q(x)\geq \lambda_0+\delta_0$. As a consequence, all the properties (\ref{assabstr}) are satisfied with $\Pi (x):=\widetilde\Pi_{\rm el}^\mu(x)$, too.

In the following we set
$$R^{-}_\mu:\mathop \oplus_1^N L^2(\R^{3})\rightarrow L^2(\R^{3+3p}), R^{-}_\mu(u_1^-,\dots, u_N^-)=\sum_{k=1}^N u_k^- w_{k,\mu}$$
and 
$$R^{+}_\mu=(R^{-}_\mu)^*:L^2(\R^{3+3p})\rightarrow\mathop \oplus_1^N L^2(\R^{3}), R^{+}_\mu g=\mathop \oplus_1^N\langle g,  w_{k,{\bar\mu}}\rangle_{L^2(\R^{3p})},$$
so that we have,
$$
R^+_\mu R^-_\mu = I \quad;\quad R^-_\mu R^+_\mu =\widetilde\Pi_{\rm el}^\mu=\Pi.
$$
As a consequence, $R_\mu^+$ sends isomorphically $\cH_{\rm red}$ into $L^2(\R^3)^{\oplus N}$, with inverse $R_\mu^-$. Moreover, by construction we also see that $R_\mu^+$ sends $H^2(\R^{3+3p})$ into $H^2(\R^3)$. Thus, in this case the study of the Grushin operator introduced in (\ref{grushin}) is equivalent to that of,
$$
\begin{aligned}
\widetilde \cG(z):&=
\left(\begin{matrix}
I & 0 \\
0 & R_\mu^+
\end{matrix}\right)\left(\begin{matrix}
P-z & I \\
\Pi & 0
\end{matrix}\right)
\left(\begin{matrix}
I & 0 \\
0 & R_\mu^-
\end{matrix}\right)\\
& =\left(\begin{matrix}
P-z & R_\mu^- \\
R_\mu^+ & 0
\end{matrix}\right)\, :\, \cD_P\oplus L^2(\R^3)^{\oplus N} \to \cH\oplus H^2(\R^3)^{\oplus N},
\end{aligned}
$$
and Theorem \ref{abstractth} gives us the first assertion of the following result :
\begin{theorem}\sl
\label{Thpseudo} Let $\mu=i\mu'$ with $\mu'>0$ fixed small enough.
For $h>0$ small enough and $z$ in a small enough complex neighborhood of $(-\infty, \lambda_0]$, one has the equivalence,
\be
\label{feschb}
z\in \Gamma(H) \Longleftrightarrow 0\in \sigma (\widetilde A_\mu(z)),
\ee
where,
$$
\widetilde A_\mu(z)=R_\mu^+ \left(z-H_\mu +\widetilde B_\mu(z) \right)R_\mu^-\, :\, H^2(\R^3)^{\oplus N} \to L^2(\R^3)^{\oplus N},
$$
with,
$$
\begin{aligned}
\widetilde B_\mu(z)&=-Y_2+(M_0+Y_3-Y_4)(1+Y)^{-1}(1-Y_1)\\
&=-RX_0R+\cO(h).
\end{aligned}
$$
Here, $R:=[Q,\Pi]=\varphi_0[Q,\Pi]\varphi_0$.

Moreover,  if $s=s(x)\in C_b^\infty (\R^3;\R)$ satisfies $|\nabla s(x)|^2\leq \theta(x,z)$, with,
 $$
\theta (x,z):=\min \{\lambda_0+\delta_0-\re z , \inf\sigma (\re\hat\Pi (Q(x)-z)\hat\Pi)\}-\frac12\delta_1,
$$
then,
$e^{s(x)/h}\widetilde A_\mu(z)e^{-s(x)/h}$ can be written as,
\be
\label{pseudo}
e^{s(x)/h}\widetilde A_\mu(z)e^{-s(x)/h}= \Lambda_{\mu,s} (z)  +L_{\mu,s}(z)\psi_0+\Theta_{\mu,s}(z)
\ee
where $\Lambda_{\mu,s} (z)$ is a $m\times m$ matrix of pseudodifferential operators on $\R^3$, $X_{1,s}:=e^{s(x)/h}X_1e^{-s(x)/h}$, $M_{1,s}:=e^{s(x)/h}M_1e^{-s(x)/h}$,
and 
\be
\label{estavecpoids}
\begin{aligned}
&L_{\mu,s}(z)=-R_\mu^+\left((H_{\rm el}^\mu -\lambda_0-\delta_0)(1-\psi_1)+RX_{0,s}(z)R\right)R_\mu^- + \cO(h);\\
&\Theta_{\mu,s}(z)=\cO(h^\infty)\, :	\, L^2(\R^{3+p})\to L^2(\R^3)^{\oplus N},
\end{aligned}
\ee
with $X_{0,s}:=e^{s(x)/h}X_0e^{-s(x)/h}$. The principal symbol of $\Lambda_{\mu,s} (z)$ is of the form, 
 $$
\sigma_{\mu,s}(x,\xi;z):=\left( z-\left( (I+\mu{}^td\omega (x))^{-1}(\xi+i\nabla s(x))\right)^2\right)I_N - \cM_\mu(x),
 $$
where $I_N$ stands for the $N\times N$ identity matrix, and $\cM_\mu (x)$ is a $N\times N$ matrix of smooth functions on $\R^3$ with eigenvalues $\lambda_1(x+\mu\omega (x)),\dots,\lambda_N(x+\mu\omega (x)$ for $x\in \R^3\backslash\cW_0$, and satisfying $\re\cM_\mu (z)\geq \lambda_0+\delta_0$ for $x\in\cW_0$. Finally, there exists a constant $C>0$ such that,
\be
\label{propLmu}
\re e^{s(x)/h}L_{\mu,s}(z)e^{-s(x)/h}\leq Ch.
\ee
Here, $\psi_0$ and $\psi_1$ are the two functions defined as in (\ref{partfunct}).
\end{theorem}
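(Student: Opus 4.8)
\emph{Plan of proof.} The equivalence (\ref{feschb}) was already reduced above to Theorem~\ref{abstractth}, applied to the conjugated Grushin operator $\widetilde\cG(z)$: since $R_\mu^+R_\mu^-=I$ and $R_\mu^-R_\mu^+=\Pi$, its $(2,2)$-inverse entry equals $R_\mu^+A(z)R_\mu^-=\widetilde A_\mu(z)=R_\mu^+(z-H_\mu+\widetilde B_\mu(z))R_\mu^-$ with $\widetilde B_\mu=-Y_2+(M_0+Y_3-Y_4)(1+Y)^{-1}(1-Y_1)$. To get the announced form of $\widetilde B_\mu$, I would use that (now with $\Pi=\widetilde\Pi_{\rm el}^\mu$) one still has $M_1=[K_0,\Pi]=\cO(h)$, since $\psi_1[Q,\Pi]=0$ ($\Pi$ coincides with $\Pi_{\rm el}^\mu$, hence commutes with $Q$, on $\cW_1\supset{\rm Supp}\,\psi_1$), whereas $M_0=[K_0,\Pi]+[Q,\Pi]=R+\cO(h)$ with $R=[Q,\Pi]$ vanishing on $\cW_1$ and satisfying $R=\varphi_0 R\varphi_0$; as $T_j=[K_0,\varphi_j]=\cO(h)$, this gives $Y_3=\cO(h)$ and $Y_4=\varphi_0 M_0\varphi_0+\varphi_1 M_1\varphi_1=R+\cO(h)$, so that the singular contributions of $M_0$ and $Y_4$ cancel in $M_0+Y_3-Y_4=\cO(h)$, while $Y_2=\varphi_0 M_0X_0M_0\varphi_0+\varphi_1 M_1X_1M_1\varphi_1=RX_0R+\cO(h)$ (using $X_0=\cO(1):H^{-1}\to H^1$ and that the $\cW_1$-term is $\cO(h^2)$); expanding $(1+Y)^{-1}=1+\cO(h)$ then gives $\widetilde B_\mu=-RX_0R+\cO(h)$.

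For the ``moreover'' part, the key observation is that multiplication by $e^{s/h}$ (with $s=s(x)$) commutes with $R_\mu^\pm$, with $\Pi=\widetilde\Pi_{\rm el}^\mu$, with $R=[Q,\Pi]$, and with every operator acting only on the electronic variables; hence $e^{s/h}\widetilde A_\mu(z)e^{-s/h}=R_\mu^+\,e^{s/h}(z-H_\mu+\widetilde B_\mu)e^{-s/h}\,R_\mu^-$. From $e^{s/h}hD_xe^{-s/h}=hD_x+i\nabla s(x)$ one gets $e^{s/h}K_0e^{-s/h}=[(I+\mu\,{}^td\omega(x))^{-1}(hD_x+i\nabla s(x))]^2$, and this conjugation lowers the real part of the $K_0$-symbol by at most $|\nabla s|^2$ (the anti-self-adjoint cross-terms dropping out), so that the two terms in the definition of $\theta(x,z)$ are tailored precisely to guarantee, under $|\nabla s|^2\le\theta(x,z)$, that $\re(e^{s/h}P_0e^{-s/h}-z)$ and $\re(\hat\Pi e^{s/h}P_1e^{-s/h}\hat\Pi-z)$ stay $\ge\tfrac12\delta_1$ modulo $\cO(h)$ on ${\rm Ran}\,\hat\Pi$; consequently $X_{0,s}=e^{s/h}X_0e^{-s/h}$ and $X_{1,s}=e^{s/h}X_1e^{-s/h}$ are $\cO(1)$, and the conjugates of $Y,Y_1,\dots,Y_4,M_0$ stay $\cO(h)$ except for the singular $R$-part, which survives only through $-RX_{0,s}R$. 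Writing $H_\mu=P_1+(H_{\rm el}^\mu-\lambda_0-\delta_0)(1-\psi_1)$, one obtains $e^{s/h}\widetilde A_\mu e^{-s/h}=R_\mu^+\big(z-e^{s/h}P_1e^{-s/h}-(H_{\rm el}^\mu-\lambda_0-\delta_0)(1-\psi_1)-RX_{0,s}R+\cO(h)\big)R_\mu^-$.

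The decomposition (\ref{pseudo}) then follows by sorting these terms. Since $P_1$ (hence $e^{s/h}P_1e^{-s/h}$) is a twisted $h$-pseudodifferential operator associated with $(\Omega_j,U_j)_j$, and the $w_{k,\mu}$ are $C_b^\infty$ on each $\Omega_j$ through the $U_j$, the twisted-pseudodifferential calculus of \cite{MaSo2} shows that $R_\mu^+(z-e^{s/h}P_1e^{-s/h})R_\mu^-$, together with the bounded $\cO(h)$-corrections carrying $X_{1,s}$ or $M_{1,s}$, coincides modulo $\cO(h^\infty)$ with an $N\times N$ matrix $\Lambda_{\mu,s}(z)$ of $h$-pseudodifferential operators on $\R^3$; the $\cO(h^\infty)$ remainder, together with the errors produced by the chart-patching, is $\Theta_{\mu,s}(z)$. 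The remaining terms, $-R_\mu^+\big((H_{\rm el}^\mu-\lambda_0-\delta_0)(1-\psi_1)+RX_{0,s}R\big)R_\mu^-$ plus the ``near-$0$'' part of the $\cO(h)$-remainder, are all of the form $(\cdot)\psi_0$, because ${\rm Supp}(1-\psi_1)\subset\{\varphi_0=1\}\subset\{\psi_0=1\}$ and $R=\varphi_0 R\varphi_0$ with $\psi_0\equiv1$ on ${\rm Supp}\,\varphi_0$; this produces $L_{\mu,s}(z)\psi_0$ with $L_{\mu,s}$ as in (\ref{estavecpoids}). The principal symbol $\sigma_{\mu,s}(x,\xi;z)$ of $\Lambda_{\mu,s}$ is read off from the $K_0$-part, which contributes $\big((I+\mu\,{}^td\omega(x))^{-1}(\xi+i\nabla s(x))\big)^2I_N$ (the matrix-valued corrections being $\cO(h)$), from $zI_N$, and from the zeroth-order block $\cM_\mu(x):=R_\mu^+\big(Q\psi_1+(\lambda_0+\delta_0)(1-\psi_1)\big)R_\mu^-$: the latter is a smooth matrix function (on $\cW_0$ one uses that $1-\psi_1$ vanishes near $x=0$, where $Q$ is singular), for $x\notin\cW_0$ it has eigenvalues $\lambda_1(x+\mu\omega(x)),\dots,\lambda_N(x+\mu\omega(x))$ (there $\psi_1\equiv1$ and $(w_{k,\mu})$ is a basis — an eigenbasis for $|x|$ large — of ${\rm Ran}\,\Pi_{\rm el}^\mu(x)$), and on $\cW_0$ one has $\re\cM_\mu(x)\ge\lambda_0+\delta_0$ because $\re Q(x)\ge\lambda_0+\delta_0$ on $\cW_0\setminus\{0\}$ (for $\delta_1$ small) and $R_\mu^+=(R_\mu^-)^*$.

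Finally, for (\ref{propLmu}) one conjugates $L_{\mu,s}$ once more by $e^{s/h}$: the potential term $(H_{\rm el}^\mu-\lambda_0-\delta_0)(1-\psi_1)$ is left unchanged and has non-negative real part on its support (contained in $\cW_0\setminus\{0\}$), hence contributes $\le0$; the $\cO(h)$-remainder, being a bounded operator, contributes $\le Ch$; and for the block $RX_{0,s}R$ one uses that $\re X_{0,s}\ge0$ on ${\rm Ran}\,\hat\Pi$ (because $\re(e^{s/h}P_0e^{-s/h}-z)>0$), that $R=[Q,\Pi]$ is anti-self-adjoint up to $\cO(\mu')$ (at $\mu=0$, $\Pi$ is an orthogonal projection and $Q$ is self-adjoint), and that $R_\mu^+=(R_\mu^-)^*$, to conclude $\re\,e^{s/h}L_{\mu,s}(z)e^{-s/h}\le Ch$. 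I expect the main obstacle to be precisely this last estimate — controlling, uniformly in $h$, the sign of the real part of the singular block under the sole constraint $|\nabla s|^2\le\theta(x,z)$ — together with the bookkeeping in the previous step needed to check that every term of $e^{s/h}\widetilde A_\mu e^{-s/h}$ splits cleanly into a genuine $N\times N$ matrix $h$-pseudodifferential operator on $\R^3$, a term supported in $\{\psi_0=1\}$, and an $\cO(h^\infty)$ remainder.
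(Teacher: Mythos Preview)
Your approach is essentially the one in the paper: reduce via the conjugated Grushin problem, split $H_\mu=P_1+(Q-\lambda_0-\delta_0)(1-\psi_1)\psi_0$, recognise $R_\mu^+(z-P_{1,s})R_\mu^-$ as an $N\times N$ matrix of $h$-pseudodifferential operators through the twisted calculus of \cite{MaSo2}, and sort the conjugated $\widetilde B_{\mu,s}$ into a pseudodifferential part, a $\psi_0$-localised part, and an $\cO(h^\infty)$ remainder. Your computation of $\widetilde B_\mu=-RX_0R+\cO(h)$ and of the principal symbol $\sigma_{\mu,s}$ matches the paper's; so does the weighted invertibility of $X_{j,s}$ under $|\nabla s|^2\le\theta(x,z)$ (the paper isolates this as a separate lemma, also recording that $X_{1,s}$ stays a twisted pseudodifferential operator).

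The one place where your sketch is genuinely incomplete is the clean extraction of $\Theta_{\mu,s}=\cO(h^\infty)$. You write that the $\cO(h)$-corrections ``carrying $X_{1,s}$ or $M_{1,s}$'' become pseudodifferential and the rest lands in $L_{\mu,s}\psi_0$, but the cross-terms in $(1+Y)^{-1}$ produce pieces of the form $(B'_kT_{k,s})\varphi_1X_{1,s}M_{1,s}R_\mu^-\varphi_1$, with $B'_k$ containing $X_{0,s}$: these are \emph{not} twisted pseudodifferential (because of the $X_{0,s}$ on the left), and they are \emph{not} a priori supported in $\{\psi_0=1\}$ on the right. The paper's device is to insert $\psi_0+(1-\psi_0)$ on the right and prove the lemma
\[
T_{k,s}\,X_{1,s}\,M_{1,s}\,(1-\psi_0)=\cO(h^\infty),
\]
which follows because the coefficients of $T_{k,s}=[K_0,\varphi_k]_s$ are supported in $\{\psi_0=1\}$, while $X_{1,s}M_{1,s}$ is a twisted $h$-pseudodifferential operator of negative order, so the disjoint cut-offs force $\cO(h^\infty)$. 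This is exactly the ``bookkeeping'' obstacle you anticipate, and it is the only missing idea; once you have it, $\Theta_{\mu,s}$ is the $(1-\psi_0)$-piece of those cross-terms, and $L_{\mu,s}\psi_0$ collects the $\psi_0$-piece together with $-R_\mu^+\big((H_{\rm el}^\mu-\lambda_0-\delta_0)(1-\psi_1)+RX_{0,s}R\big)R_\mu^-$.

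Two small slips. In your smoothness argument for $\cM_\mu$ you say ``$1-\psi_1$ vanishes near $x=0$''; it is $\psi_1$ that vanishes there (so $Q\psi_1$ is smooth), while $1-\psi_1\equiv1$ near $0$. And in your argument for (\ref{propLmu}) you speak of ``conjugating $L_{\mu,s}$ once more by $e^{s/h}$''; since $L_{\mu,s}$ is already defined through $X_{0,s}=e^{s/h}X_0e^{-s/h}$, the relevant positivity is simply $\re X_{0,s}\ge 0$ (from the weighted-ellipticity lemma), combined with $R^*=-R+\cO(|\mu|)$ and $R_\mu^+=(R_\mu^-)^*$, which gives $\re\big(-R_\mu^+RX_{0,s}RR_\mu^-\big)\le C|\mu|$; together with $\re(H_{\rm el}^\mu-\lambda_0-\delta_0)\ge0$ on ${\rm Supp}(1-\psi_1)\subset\cW_0\setminus\{0\}$ this yields the bound. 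The paper itself does not spell out this last step in the proof, but it follows from the same lemma it states on $\re X_{j,s}\ge0$.
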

\begin{remark}\sl
Again, using Remark \ref{rmmubar}, we see that this is also equivalent to: $0\in \sigma(\widetilde A_{\overline\mu}(\overline z))$.
\end{remark}
\begin{proof} We have to prove (\ref{pseudo}). Using the same notations as in (\ref{partfunct}), we have,
$$
H_\mu=P=P_1+ (Q-\lambda_0-\delta_0)(1-\psi_1)=P_1+ (Q-\lambda_0-\delta_0)(1-\psi_1)\psi_0,
$$
and thus,
\be
\label{Amutilde}
\widetilde A_\mu(z)=R_\mu^+ \left(z-P_1\right)R_\mu^- +R_\mu^+\left(\widetilde B_\mu(z)-(Q-\lambda_0-\delta_0)(1-\psi_1)\psi_0 \right)R_\mu^-.
\ee
The fact that $R_\mu^+ \left(z-P_1\right)R_\mu^-$ is  a matrix of smooth semiclassical pseudodifferential operators on $\R^3$ is a direct consequence of the fact that $P_1$ is a twisted pseudodifferential operator associated with the family  $(\Omega_j, U_j)_{j=0,\dots, J}$, and that the same is true for $R_\mu^\pm$ (see \cite{MaSo2} for the terminology and details). Moreover, its symbol is a second-order polynomial with respect to $\xi$, and its principal symbol is of the form $z-\left( (I+\mu{}^td\omega (x))^{-1}\xi\right)^2 - \cM_\mu(x) $, where $\cM_\mu(x)$ is the matrix,
$$
\cM_\mu(x):= R_\mu^+\left(Q(x)\psi_1(x)+(\lambda_0+\delta_0)(1-\psi_1(x))\right)R_\mu^-.
$$
In particular, when $x\in  \R^3\backslash\cW_0$, then $\psi_1(x)=1$, and the eigenvalues of $\cM_\mu(x)$ are those of $Q(x)\Pi_{\rm el}^\mu(x)$, that is, $\lambda_1(x+\mu\omega (x)),\dots,\lambda_N(x+\mu\omega (x)$. Moreover, when $x\in\cW_0$, then $\re Q(x)\geq \lambda_0+\delta_0$, and thus $\re \cM_\mu (x)\geq \lambda_0+\delta_0$, too. As a consequence, $e^{s/h}R_\mu^+ \left(z-P_1\right)R_\mu^-e^{-s/h}$ is a pseudodifferential operator, too, with principal symbol $z-\left( (I+\mu{}^td\omega (x))^{-1}(\xi+i\nabla s)\right)^2 - \cM_\mu(x)$.

In view of (\ref{Amutilde}), and since $R_\mu^+(Q-\lambda_0-\delta_0)(1-\psi_1)\psi_0 R_\mu^-$ commutes with $e^{s/h}$, now we are reduced to study $R_\mu^+e^{s/h}\widetilde B_\mu(z)e^{-s/h}R_\mu^-$. We first prove,
\begin{lemma}\sl
If $s=s(x)\in C_b^\infty (\R^3;\R)$ satisfies $|\nabla s(x)|^2\leq \theta(x,z)$, then,
for $j=0,1$, one has,
$$
\re e^{s(x)/h}X_j(z) e^{-s(x)/h}\geq 0,
$$
and,
$$
e^{s(x)/h}X_j(z) e^{-s(x)/h}=\cO(1)\, :\, L^2(\R^3 ; L^2(\R^p)) \to H^2(\R^3 ; L^2(\R^p)),
$$
uniformly as $h\to 0_+$. Moreover, $e^{s(x)/h}X_1(z) e^{-s(x)/h}$ is a twisted pseudodifferential operator associated with the family  $(\Omega_j, U_j)_{j=0,\dots, J}$.
\end{lemma}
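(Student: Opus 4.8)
The plan is to pull the weight through everything and reduce to the conjugated operators $P_{j,s}:=e^{s/h}P_je^{-s/h}$. Since $e^{s(x)/h}$ is multiplication by a function of $x$ alone, it commutes with $\Pi$, $\hat\Pi$, $Q_0$, $Q_1$, $\psi_0$ and $\psi_1$; only $K_0$ is affected, and because $K_0=[(1+\mu\,{}^td\omega(x))^{-1}hD_x]^2$ one gets $K_{0,s}:=e^{s/h}K_0e^{-s/h}=[(1+\mu\,{}^td\omega(x))^{-1}(hD_x+i\nabla s(x))]^2$. Hence $e^{s/h}X_0e^{-s/h}=\hat\Pi(P_{0,s}-z)^{-1}\hat\Pi$ and $e^{s/h}X_1e^{-s/h}=\hat\Pi(\hat\Pi(P_{1,s}-z)\hat\Pi)^{-1}\hat\Pi$, with $P_{j,s}=K_{0,s}+Q_j$. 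So the whole statement becomes a statement about $P_{0,s}-z$ and the $\hat\Pi$-compression of $P_{1,s}-z$.

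The accretivity is the heart of the matter. Expanding the square, one computes $\re K_{0,s}$: modulo an $\cO(h)$ subprincipal term and an $\cO(\mu')$ correction from the complex distortion, it equals (the real part of) a second-order operator in $x$ which is non-negative — indeed elliptic in $\xi$ for $\mu$ small — minus $|\nabla s(x)|^2$; thus $\re K_{0,s}\geq -|\nabla s(x)|^2-C(h+\mu')$ as a quadratic form, with $\mu'$ fixed small enough first. For $j=0$ I feed in $\re Q_0\geq\lambda_0+\delta_0$ everywhere: on $\{\psi_0<1\}\subset\supp\psi_0\subset\cW_0$ one has $\re Q(x)\geq\lambda_0+\delta_0$ and $(\lambda_0+\delta_0+C_0)(1-\psi_0)\geq0$, while on $\{\psi_0=0\}$, $\re Q_0=\re Q+(\lambda_0+\delta_0+C_0)\geq-C_0+\lambda_0+\delta_0+C_0$ by the lower semi-bound. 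Combined with the hypothesis $|\nabla s(x)|^2\leq\theta(x,z)\leq\lambda_0+\delta_0-\re z-\tfrac12\delta_1$, this gives $\re(P_{0,s}-z)\geq\tfrac12\delta_1-C(h+\mu')\geq\tfrac14\delta_1>0$ for $h$ small. For $j=1$ one argues with the compressed form: writing $v=\hat\Pi u$, on $\{\psi_1<1\}$ one is inside $\cW_0$ (so $\re Q_1\geq\lambda_0+\delta_0$, as $\re(Q\psi_1)\geq(\lambda_0+\delta_0)\psi_1$), while on $\{\psi_1=1\}$, $Q_1=Q$ and $\langle\re(Q(x)-z)v(x),v(x)\rangle\geq\inf\sigma\!\big(\re\hat\Pi(x)(Q(x)-z)\hat\Pi(x)\big)\|v(x)\|^2\geq(|\nabla s(x)|^2+\tfrac12\delta_1)\|v(x)\|^2$ by the definition of $\theta$; the $-|\nabla s|^2$ coming from $\re K_{0,s}$ is exactly cancelled and one obtains $\re\hat\Pi(P_{1,s}-z)\hat\Pi\geq\tfrac14\delta_1\,\hat\Pi$. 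Since $e^{s/h}$ is bounded invertible, $P_{0,s}-z$ and the restriction of $\hat\Pi(P_{1,s}-z)\hat\Pi$ to ${\rm Ran}\,\hat\Pi$ are invertible, and by the lower bounds their inverses are accretive; sandwiching by $\hat\Pi$ yields $\re X_{j,s}\geq0$ (the failure of $\hat\Pi$ to be self-adjoint for complex $\mu$ costs only an $\cO(\mu')$ term, absorbed into the same $\tfrac12\delta_1$ margin).

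For the $L^2\to H^2$ estimate I would use that $P_{j,s}$ is, in the $x$-variable, a second-order operator whose principal symbol $\big((1+\mu\,{}^td\omega(x))^{-1}(\xi+i\nabla s(x))\big)^2$ is elliptic in $\xi$ for $\mu$ small, and combine the resulting semiclassical elliptic a priori estimate with the uniform lower bound $\re(P_{j,s}-z)\geq\tfrac14\delta_1$ (resp. its $\hat\Pi$-compression) just obtained: one gets $\|u\|_{H^2(\R^3;L^2(\R^p))}\leq C(\|(P_{j,s}-z)u\|+\|u\|)\leq C'\|(P_{j,s}-z)u\|$ uniformly in $h$ (the fibre operator $Q_j$ being harmless since $Q_j\geq-C$), hence $X_{j,s}=\cO(1):L^2\to H^2$, and similarly with the invertible compression for $X_{1,s}$.

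Finally, that $e^{s/h}X_1e^{-s/h}$ is a twisted pseudodifferential operator associated with $(\Omega_j,U_j)$ I would deduce from the twisted calculus of \cite{MaSo2}: $P_1$ is such an operator (already observed above), and since $s\in C_b^\infty$, conjugation by $e^{s/h}$ keeps it in the class — it amounts to the holomorphic shift $\xi\mapsto\xi+i\nabla s(x)$ of its symbol, which is polynomial in $\xi$; moreover $\hat\Pi=I-\widetilde\Pi_{\rm el}^\mu=I-R_\mu^-R_\mu^+$ is a twisted operator of order $0$ thanks to the twisted smoothness of the $w_{k,\mu}$ in (\ref{wkmutwist}). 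Hence $\hat\Pi(P_{1,s}-z)\hat\Pi$ is a twisted operator, and by the accretive lower bound it is elliptic (its principal symbol is invertible on ${\rm Ran}\,\hat\Pi(x)$ in the $y$-fibre), so the parametrix construction of the twisted calculus produces an inverse — and thus $X_{1,s}=\hat\Pi(\hat\Pi(P_{1,s}-z)\hat\Pi)^{-1}\hat\Pi$ — in the same class. The main obstacle I expect is precisely this last point: carrying out the inversion of an elliptic twisted pseudodifferential operator rigorously inside the calculus of \cite{MaSo2}, keeping track of the operator-valued symbols acting between $H^2(\R^{3p})$ and $L^2(\R^{3p})$; a secondary but real point is the bookkeeping showing that the $\cO(h)$ and $\cO(\mu')$ errors in $\re K_{0,s}$ are genuinely dominated by the $\tfrac12\delta_1$ buffer built into $\theta$.
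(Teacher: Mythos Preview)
Your proposal is correct and follows essentially the same route as the paper: conjugate by $e^{s/h}$ (which only shifts $\xi\mapsto\xi+i\nabla s$ in $K_0$), use $\re k_s(x,\xi)\geq\tfrac12\xi^2-|\nabla s|^2$ modulo $\cO(|\mu|+h)$, and absorb the errors into the $\tfrac12\delta_1$ buffer built into $\theta$ to get accretivity plus an elliptic lower bound; then invoke the twisted calculus of \cite{MaSo2} for the last assertion. The only cosmetic differences are that the paper keeps the term $\tfrac12\xi^2$ in the single inequality $\re\,e^{s/h}\hat\Pi(P_j-z)\hat\Pi e^{-s/h}\geq\re\hat\Pi(-\tfrac12 h^2\Delta_x+\tfrac14\delta_1)\hat\Pi$ (so the $H^2$ bound comes out at once rather than via a separate elliptic-regularity step), and it proves the twisted pseudodifferential property by showing the full conjugated Grushin operator $e^{s/h}\cG_1(z)e^{-s/h}$ is an elliptic twisted operator and inverting that, rather than inverting the compression $\hat\Pi(P_{1,s}-z)\hat\Pi$ directly.
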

\begin{proof}
We have,
$$
e^{s(x)/h}\hat\Pi (P_j-z) \hat\Pi  e^{-s(x)/h}=\hat\Pi (e^{s(x)/h}K_0e^{-s(x)/h}+Q_j-z) \hat\Pi,
$$
and the principal symbol $k_s$ of the semiclassical differential operator\\ $e^{s(x)/h} K_0 e^{-s(x)/h}$ is given by
$$
k_s(x,\xi)=\left((I+\mu {}^td\omega (x))^{-1}(\xi + i\nabla s(x))\right)^2,
$$
and thus, for $\mu$ small enough,
$$
\re k_s(x,\xi)=(1+\cO(|\mu|))\xi^2-(1+\cO(|\mu|))|\nabla s|^2\geq \frac12\xi^2-\theta (x,z)-\frac18\delta_1.
$$
As a consequence, for $h>0$ small enough,  we obtain,
\be
\label{ellipWeight}
\begin{aligned}
\re e^{s(x)/h}\hat\Pi (P_j-z) \hat\Pi  e^{-s(x)/h}&\geq \re \hat\Pi (-\frac12h^2\Delta_x+Q_j-\theta(x,z)-\frac14\delta_1)\hat\Pi\\
& \geq \re \hat\Pi (-\frac12h^2\Delta_x+\frac14\delta_1)\hat\Pi.
\end{aligned}
\ee
Since $e^{s(x)/h}X_j(z) e^{-s(x)/h} = \left(e^{s(x)/h}\hat\Pi (P_j-z) \hat\Pi  e^{-s(x)/h}\left|_{{\rm Ran}\hat\Pi}\right.\right)^{-1}\hat\Pi$, the first two results follow. Moreover, for $j=1$, we know that $e^{s(x)/h}\hat\Pi (P_1-z) \hat\Pi  e^{-s(x)/h}$ is a twisted pseudodifferential operator associated with the family  $(\Omega_j, U_j)_{j=0,\dots, J}$. Thus, so is $e^{s/h}\cG_1(z)e^{-s/h}$, and (\ref{ellipWeight}) shows that it is elliptic. Then, the last result follows from the general theory of \cite{MaSo2}.
\end{proof}

This lemma allows us to extend the result of Theorem \ref{abstractth} by taking into account the weight $e^{s/h}$. Indeed, working with $e^{s/h}\cG_j(z)e^{-s/h}$ instead of $\cG_j(z)$, we see that all the arguments can be repeated, the main point being that, in this case,  $Q(x)$ will be substituted with $Q(x)-\left( (I+\mu{}^td\omega (x))^{-1}\nabla s\right)^2$, and $K_0$ will be $\left( (I+\mu{}^td\omega (x))^{-1}(D_x+i\nabla s)\right)^2+\left( (I+\mu{}^td\omega (x))^{-1}\nabla s\right)^2$, leaving satisfied the conditions (\ref{K0pos})-(\ref{assabstr}).

In particular, 
according to the expression of $\widetilde B_\mu (z)$ and the definition of $Y$ given in (\ref{defbasic}), and adding an index $s$ to the operators, meaning that they are conjugated with $e^{s/h}$ (when they don't commute with it), we have,
$$
\widetilde B_{\mu,s} (z)=e^{s/h}\widetilde B_\mu (z)e^{-s/h}=-Y_{2,s}+\left(M_{0,s} -Y_{4,s}+B_0T_{0,s}+B_1T_{1,s}\right)(1-Y_{1,s})
$$
with,
$$
B_0, B_1=\cO(h)\, :\, H^{-1}(\R^3 ; L^2(\R^p)) \to L^2(\R^3 ; L^2(\R^p)).
$$
Thus, using the expressions of $Y_2$ and $Y_1$ given in (\ref{defbasic}) and the definition of $R$,
$$
\begin{aligned}
\widetilde B_{\mu,s} (z)=&-\varphi_1M_{1,s}X_{1,s}M_{1,s}\varphi_1\\
&+\left(M_{0,s} -Y_{4,s}+B_0'T_{0,s}+B'_1T_{1,s}\right)(1-\varphi_1X_{1,s}M_{1,s}X_{1,s})+B_2\psi_0,
\end{aligned}
$$
with $B_2=-RX_{0,s}R +\cO(h)$.

Now, since $T_0$ and $T_1$ are differential operators with coefficients supported in $\{\psi_0=1\}$, and since $\Pi (M_0-Y_4)\Pi =0$, we deduce,
$$
\begin{aligned}
R_\mu^+\widetilde B_{\mu, s}(z)R_\mu^-=&\Lambda_{\mu,1} -R_\mu^+(M_{0,s} -Y_{4,s}+B_0'T_{0,s}+B'_1T_{1,s})\varphi_1X_{1,s}M_{1,s}R_\mu^-\varphi_1\\
&+R_\mu^+B_2'R_\mu^-\psi_0,
\end{aligned}
$$
with
$$
\begin{aligned}
&\Lambda_{\mu,1}:=-\varphi_1R_\mu^+M_{1,s}X_{1,s}M_{1,s}R_\mu^-\varphi_1;\\
&B_2'=-RX_{0,s}R +\cO(h).
\end{aligned}
$$
By the arguments of \cite{MaSo2}, we see that $\Lambda_{\mu,1}$ is a semiclassical pseudodifferential operator of order $-2$ (and thus $\cO(h^2)$ on $L^2(\R^3)$). Moreover, since $[Q,\Pi]\varphi_1 =0$, we have $M_0\varphi_1 = [K_0,\Pi]\varphi_1=M_1\varphi_1$, and $Y_4\varphi_1 =\varphi_0M_1\varphi_0\varphi_1+\varphi_1M_1\varphi_1^2$. as a consequence (still with the arguments of \cite{MaSo2}), we see that $R_\mu^+(M_{0,s} -Y_{4,s})\varphi_1X_{1,s}M_{1,s}R_\mu^-\varphi_1$ is a semiclassical pseudodifferential operator of order $-2$, too. Hence, we are led to,
$$
R_\mu^+\widetilde B_{\mu,s}(z)R_\mu^-=\Lambda_{\mu,1}' -R_\mu^+(B_0'T_{0,s}+B'_1T_{1,s})\varphi_1X_{1,s}M_{1,s}R_\mu^-\varphi_1+R_\mu^+B_2'R_\mu^-\psi_0,
$$
where $\Lambda_{\mu,1}' $ is a semiclassical pseudodifferential operator of order $-2$. We prove,
\begin{lemma}\sl
For $k=1,2$, one has,
$$
T_{k,s}X_{1,s}M_{1,s}(1-\psi_0)=\cO(h^\infty)\, :\, L^2(\R^{3+p})\to L^2(\R^{3+p}).
$$
\end{lemma}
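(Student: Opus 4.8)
The plan is to exploit the disjointness of $x$-supports. On the left of the composition sits the differential operator $T_{k,s}=e^{s/h}T_ke^{-s/h}$, whose coefficients are supported in the compact set $\supp\nabla\varphi_k\subset\supp\varphi_0$ (for $k=1$ this is because $\varphi_0^2+\varphi_1^2=1$ forces $\nabla\varphi_1=0$ wherever $\varphi_0=0$); since $\psi_0\equiv 1$ on $\supp\varphi_0$, this set lies in $\{\psi_0=1\}$, hence at a positive distance from $\supp(1-\psi_0)\subset\{\psi_0<1\}$. On the right sits the multiplier $1-\psi_0$, supported away from $x=0$. Between them, the only factor that is not local in $x$ is $X_{1,s}$; but by the previous Lemma it is a twisted pseudodifferential operator associated with $(\Omega_j,U_j)_{j=0,\dots,J}$, hence pseudolocal, so that whenever $a,b\in C_b^\infty(\R^3)$ have disjoint supports one has $aX_{1,s}b=\cO(h^\infty)$ in every operator-norm between (semiclassical) Sobolev spaces. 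This is the only analytic input, and it comes from the twisted symbolic calculus of \cite{MaSo2}; the rest is bookkeeping.

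First I would observe that $M_{1,s}$ is local in $x$: since $M_1=[P_1,\Pi]=[K_0,\Pi]+\psi_1[Q,\Pi]$, with $\Pi=\widetilde\Pi_{\rm el}^\mu$ acting fibrewise in $x$ and of class $C^2$, with $K_0$ a differential operator in $x$ and $Q(x)\psi_1(x)$ acting fibrewise, $M_1$ is a differential operator in $x$ with operator-valued coefficients, and conjugation by $e^{s(x)/h}$ preserves this. The term $\psi_1[Q,\Pi]$ is not $\cO(h)$ in general, but this is harmless here: using $[Q,\Pi]=\varphi_0[Q,\Pi]\varphi_0$ together with $\varphi_0(1-\psi_0)=0$, one gets $M_1(1-\psi_0)=[K_0,\Pi](1-\psi_0)$, whence $M_{1,s}(1-\psi_0)=e^{s/h}[K_0,\Pi](1-\psi_0)e^{-s/h}=\cO(h)$ from $L^2(\R^3;L^2(\R^p))$ to $H^{-1}(\R^3;L^2(\R^p))$. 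Next I would fix cut-offs $\tilde\chi_0,\chi\in C_b^\infty(\R^3;[0,1])$ with $\tilde\chi_0\equiv 1$ on a neighbourhood of $\supp\nabla\varphi_k$, $\chi\equiv 1$ on a neighbourhood of $\supp(1-\psi_0)$, and $\supp\tilde\chi_0\cap\supp\chi=\emptyset$ (possible thanks to the positive distance noted above). Since $T_{k,s}$ is local with coefficients supported in $\supp\nabla\varphi_k$, it annihilates every function vanishing near $\supp\nabla\varphi_k$, so $T_{k,s}\tilde\chi_0=T_{k,s}$; and since $M_{1,s}$ is local, $M_{1,s}(1-\psi_0)u$ is supported in $\supp(1-\psi_0)$, so $\chi M_{1,s}(1-\psi_0)=M_{1,s}(1-\psi_0)$. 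Therefore
$$T_{k,s}X_{1,s}M_{1,s}(1-\psi_0)=T_{k,s}\bigl(\tilde\chi_0X_{1,s}\chi\bigr)M_{1,s}(1-\psi_0).$$

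Finally I would estimate along a chain of Sobolev spaces: $M_{1,s}(1-\psi_0)$ maps $L^2(\R^3;L^2(\R^p))$ into $H^{-1}$ with norm $\cO(h)$; then $\tilde\chi_0X_{1,s}\chi$ maps $H^{-1}$ into $H^1$ with norm $\cO(h^\infty)$ by pseudolocality of $X_{1,s}$, using $X_{1,s}=\cO(1):H^{-1}\to H^1$ (a consequence of the coercivity estimate (\ref{ellipWeight}) of the previous Lemma); and then $T_{k,s}$ maps $H^1$ into $L^2$ with norm $\cO(h)$, being a first-order semiclassical differential operator carrying an overall factor $h$. Multiplying, the composition is $\cO(h^\infty):L^2(\R^{3+p})\to L^2(\R^{3+p})$, which is the claim. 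The main — indeed essentially the only — step that requires care is the pseudolocality estimate $\tilde\chi_0X_{1,s}\chi=\cO(h^\infty)$: writing $X_{1,s}$ as the inverse of the elliptic twisted operator $e^{s/h}\hat\Pi(P_1-z)\hat\Pi e^{-s/h}$, one must check that composing it with cut-offs of disjoint supports produces a symbol that is $\cO(h^\infty)$, via a non-stationary phase argument within the calculus of \cite{MaSo2}. Everything else — the locality of $T_{k,s}$ and $M_{1,s}$, the invariance of the relevant supports under conjugation by $e^{s(x)/h}$, and the support disjointness — is routine.
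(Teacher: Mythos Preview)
Your proof is correct and follows essentially the same approach as the paper's: exploit the disjointness between the $x$-support of the coefficients of $T_{k,s}$ (contained in $\{\psi_0=1\}$) and the support of $1-\psi_0$, together with the pseudolocality of the twisted pseudodifferential operator sitting in between. The paper is more telegraphic---it invokes directly the twisted representation $X_{1,s}M_{1,s}=\sum_{j}U_j^{-1}\chi_jA_j^NU_j\chi_j+\cO(h^N)$ from \cite{MaSo2} and concludes from the support of $T_{k,s}$'s coefficients---whereas you separate $M_{1,s}$ (local in $x$) from $X_{1,s}$ (twisted pseudolocal), introduce explicit cutoffs $\tilde\chi_0,\chi$, and chain Sobolev estimates; but the content is the same.
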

\begin{proof} We write $X_{1,s}M_{1,s}$ by using the general expression of a twisted pseudodifferential operator (see \cite{MaSo2}, Definition 4.4), namely,
$$
X_{1,s}M_{1,s}=\sum_{j=0}^J U_j^{-1}\chi_j A_j^NU_j\chi_j + \cO(h^N),
$$
where $N\geq 1$ is arbitrary, $\chi_j\in C_b^\infty (\R^3)$ is supported in $\Omega_j$, and $A_j^N$ is a pseudodifferential operator (of degree -1) with respect to the variable $x$, with operator-valued symbol. Then, the result immediately follows from the fact that $T_{k,s}$ is a first-order differential operator with smooth coefficients supported in $\{\psi_0 =1\}$.
\end{proof}

Then, formula (\ref{pseudo}) follows by setting,
\be
\label{Theta}
\begin{aligned}
 \Lambda_{\mu,s}:=&R_\mu^+ \left(z-P_{1,s}\right)R_\mu^- + \Lambda_{\mu,1}';\\
L_{\mu,s}:=& R_\mu^+ (B_2'-(B_0'T_{0,s}+B'_1T_{1,s})\varphi_1X_{1,s}M_{1,s}R_\mu^-\varphi_1\\
&-(Q-\lambda_0-\delta_0)(1-\psi_1))R_\mu^-;\\
 \Theta_{\mu,s}(z):=& -R_\mu^+(B_0'T_{0,s}+B'_1T_{1,s})\varphi_1X_{1,s}M_{1,s}R_\mu^-\varphi_1(1-\psi_0).\\
\end{aligned}
\ee
\end{proof}

\section{Action on WKB solutions}
\label{sect5}

In this section, we study the action of $\widetilde A_\mu (z)$ on WKB functions of the type $a(x,h)e^{-s(x)/h}$, where the symbol $a(x,h)$ admits some semiclassical expansion of the type,
$$
a(x,h)\sim \sum_{k\geq 0}h^{k/2}a_k(x)
$$
as $h\to 0_+$. We first prove,
\begin{proposition}\sl
\label{asyX1}
Let $u\in L^2(\R^{3+p})$ such that, for all $j=0,1,\dots,r$, and $x\in\Omega_j$, $U_j(x)u(x,y)$ that can be written as $U_j(x)u(x,y)=a_j(x,y;h)e^{-s(x)/h}$ with $s\in C_b^\infty (\R^3;\R)$ independent of $h$, $|\nabla s(x)|^2\leq \theta(x,z)$,  $a_j\in C^\infty (\Omega_j; L^2(\R^p))$, $a_j$ admits in $C^\infty (\Omega_j; L^2(\R^p))$  an asymptotic expansion of the type,
$$
a_j(x,y;h)\sim \sum_{k\geq 0}h^{k/2}a_{j,k}(x,y)
$$
as $h\to 0_+$, with $a_{j,k}(x,y)\in C^\infty (\Omega_j; L^2(\R^p))$ independent of $h$. Then, for any $\chi_j\in C_0^\infty (\Omega_j)$, the function $e^{s/h}U_j\chi_jX_1(z)u$ admits in $C^\infty (\Omega_j; H^2(\R^p))$ an asymptotic expansion of the type,
$$
e^{s(x)/h}U_j\chi_jX_1(z)u(x,y;h)\sim \sum_{k\geq 0}h^{k/2}b_{j,k}(x,y;z),
$$
with,
$$
b_{j,0} := U_j[\hat\Pi\left(Q_1 (x)-[(1+\mu{}^td\omega)^{-1}\nabla s]^2-z\right)\hat\Pi]^{-1}\hat \Pi U_j^{-1} \chi_ja_{j,0}.
$$
Moreover, the support in $x$ of $b_{j,k}$ is included in the union of the supports in $x$ of $\chi_ja_{j,0}, \chi_ja_{j,1},\dots, \chi_ja_{j,k}$.
\end{proposition}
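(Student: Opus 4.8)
The plan is to reduce the statement, by stripping off the global weight $e^{-s/h}$, to an application of the twisted pseudodifferential calculus of \cite{MaSo2} to the conjugated operator $X_{1,s}(z):=e^{s(x)/h}X_1(z)e^{-s(x)/h}$, and then to read off the expansion from the fact that $u$ is locally a non-oscillating classical symbol in powers of $h^{1/2}$. First, since $U_j(x)$ acts only on the $y$-variable and $\chi_j$ is a function of $x$, both commute with multiplication by $e^{s(x)/h}$, so that
$$
e^{s/h}U_j\chi_jX_1(z)u = U_j\chi_j\,X_{1,s}(z)\,v,\qquad v:=e^{s/h}u,
$$
and, by hypothesis, $U_l(x)v(x,y)=a_l(x,y;h)\sim\sum_{k\ge0}h^{k/2}a_{l,k}(x,y)$ in $C^\infty(\Omega_l;L^2(\R^p))$ for each $l$. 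By the Lemma concerning $e^{s(x)/h}X_j(z)e^{-s(x)/h}$ established in the proof of Theorem~\ref{Thpseudo}, $X_{1,s}(z)$ is a twisted pseudodifferential operator associated with $(\Omega_j,U_j)_{0\le j\le J}$, it satisfies $X_{1,s}(z)=\cO(1):L^2(\R^3;L^2(\R^p))\to H^2(\R^3;L^2(\R^p))$, and on ${\rm Ran}\,\hat\Pi$ it is the inverse of the elliptic twisted operator $e^{s/h}\hat\Pi(P_1-z)\hat\Pi e^{-s/h}$. Since $Q_1(x)$ commutes with $e^{s/h}$, and using the principal symbol $k_s(x,\xi)=\big((I+\mu\,{}^td\omega(x))^{-1}(\xi+i\nabla s(x))\big)^2$ of $e^{s/h}K_0e^{-s/h}$ computed there, the operator-valued principal symbol of $X_{1,s}(z)$, read in the chart $\Omega_l$, is the parametrix symbol
$$
\sigma_{1,s}^{(l)}(x,\xi;z)=U_l(x)\,\big[\hat\Pi\big(k_s(x,\xi)+Q_1(x)-z\big)\hat\Pi\big]^{-1}\hat\Pi\,U_l(x)^{-1},
$$
which, for every fixed $(x,\xi)$, maps $L^2(\R^p)$ into $H^2(\R^p)$ by ellipticity of $Q_1(x)$ in the $y$-variable.

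Next I would run the symbolic calculus on the classical symbol $v$. Writing $X_{1,s}(z)$, modulo $\cO(h^N)$ for arbitrary $N\ge1$, as a finite sum $\sum_l U_l^{-1}\tilde\chi_l A_l U_l\tilde\chi_l$ with $\tilde\chi_l\in C_b^\infty(\R^3)$ supported in $\Omega_l$ and $A_l$ an $x$-pseudodifferential operator of non-positive order with operator-valued symbol (Definition~4.4 of \cite{MaSo2}), and applying $U_j\chi_j$ on the left, one is reduced to finitely many expressions $\chi_j\,(U_jU_l^{-1})\,\tilde\chi_l\,A_l(\tilde\chi_l a_l)$, using $U_lv=a_l$ on ${\rm Supp}\,\tilde\chi_l$. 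Since $a_l$ does not oscillate and is classical in $h^{1/2}$, the standard symbolic expansion of $A_l(\tilde\chi_l a_l)$ — obtained by Taylor-expanding $\tilde\chi_l a_l$ in the defining oscillatory integral — shows that $A_l(\tilde\chi_l a_l)$ again admits an asymptotic expansion in powers of $h^{1/2}$, now valued in $C^\infty(\Omega_l;H^2(\R^p))$ thanks to the $H^2(\R^p)$-mapping property of the symbol recalled above. The transition maps $U_jU_l^{-1}$ are smooth families of unitaries on the overlaps and preserve such expansions, so collecting all contributions yields an asymptotic expansion $e^{s/h}U_j\chi_jX_1(z)u\sim\sum_{k\ge0}h^{k/2}b_{j,k}(\,\cdot\,;z)$ in $C^\infty(\Omega_j;H^2(\R^p))$.

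To pin down the leading term, observe that, since $a_j$ does not oscillate, the $h^0$-coefficient of a pseudodifferential operator applied to it equals its principal symbol evaluated at $\xi=0$ acting on $a_{j,0}$; inserting $k_s(x,0)=-\big[(I+\mu\,{}^td\omega(x))^{-1}\nabla s(x)\big]^2$ into $\sigma_{1,s}^{(j)}$ and using the overlap identity $a_{j,0}=(U_jU_l^{-1})a_{l,0}$ to move the chart conjugation through, one obtains exactly
$$
b_{j,0}=U_j\big[\hat\Pi\big(Q_1(x)-[(I+\mu\,{}^td\omega)^{-1}\nabla s]^2-z\big)\hat\Pi\big]^{-1}\hat\Pi\,U_j^{-1}\,\chi_ja_{j,0}.
$$
For the support claim, each $b_{j,k}$ is a finite sum of terms, each of which is an $x$-differential operator of order $\le k$ with smooth operator-valued coefficients applied to one of $\chi_ja_{j,0},\dots,\chi_ja_{j,k}$; since the twisted calculus is pseudolocal modulo $\cO(h^\infty)$, each such term is supported in $x$ inside ${\rm Supp}_x\chi_ja_{j,m}$ for the relevant index $m\le k$, whence ${\rm Supp}_xb_{j,k}\subset\bigcup_{m\le k}{\rm Supp}_x\chi_ja_{j,m}$.

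The main obstacle will be bookkeeping rather than anything conceptual: one must keep careful track of how the local data $a_l=U_lv$ in the different charts are identified by the transition unitaries $U_jU_l^{-1}$, so that the expansion produced in $\Omega_j$ is well defined and independent of the particular decomposition chosen for the twisted operator $X_{1,s}$ — which is precisely what the formalism of \cite{MaSo2} is built to handle. A secondary technical point is to check that the operator inversion defining $X_{1,s}$ on ${\rm Ran}\,\hat\Pi$ preserves the $H^2(\R^p)$-valued smoothness of the symbols; this follows from the ellipticity estimate (\ref{ellipWeight}) together with the bound $X_{1,s}=\cO(1):L^2\to H^2$ from that Lemma.
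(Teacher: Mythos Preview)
Your proposal is correct and rests on the same ingredients as the paper's proof: conjugate away the weight, invoke the Lemma from the proof of Theorem~\ref{Thpseudo} to recognise $X_{1,s}(z)$ as a twisted $h$-admissible operator with the stated principal symbol, then read off the expansion from the non-oscillating input via stationary phase. The only organisational difference is that the paper sidesteps the cross-chart bookkeeping you flag as the main obstacle: rather than decomposing $X_{1,s}$ as $\sum_l U_l^{-1}\tilde\chi_l A_l U_l\tilde\chi_l$ and reassembling via transition maps $U_jU_l^{-1}$, it picks a single enlarged cutoff $\tilde\chi_j\in C_0^\infty(\Omega_j)$ with $\tilde\chi_j\chi_j=\chi_j$, treats $U_j\tilde\chi_j e^{s/h}X_1e^{-s/h}U_j^{-1}\tilde\chi_j$ as an ordinary $h$-admissible operator in the single chart $\Omega_j$ and applies stationary phase there, and then shows $U_j\chi_j e^{s/h}X_1(1-\tilde\chi_j)u\sim 0$ separately by the pseudolocality of the twisted calculus. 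This buys a cleaner leading-term computation (no overlap identities or partition-of-unity reassembly needed) and makes the support statement immediate; your route is equally valid but carries the extra bookkeeping you anticipated.
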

\begin{proof} Let $\tilde\chi_j\in C_0^\infty (\Omega_j)$ such that $\tilde\chi_j \chi_j=\chi_j$. As we have seen in the previous section, $e^{s/h}\cG_1(z)^{-1}e^{-s/h}$ is an elliptic twisted pseudodifferential operator.
Thus, by the general theory of \cite{MaSo2} (in particular Propositions 4.6, 4.10 and 4.14), we know that $U_j\tilde\chi_je^{s/h}\cG_1(z)^{-1}e^{-s/h} U_j^{-1}\tilde\chi_j$ is a bounded $h$-admissible operator on $L^2(\R^3 ; L^2(\R^p \oplus \C^m)$ (with operator-valued symbol). In particular, $U_j\tilde\chi_je^{s/h}X_1(z)e^{-s/h} U_j^{-1}\tilde\chi_j$ is a bounded $h$-admissible operator on $L^2(\R^3 ;L^2(\R^p))$. This means that, for any $M\geq 1$, it can be written as,
$$
\begin{aligned}
U_j\tilde\chi_je^{s/h}X_1(z)e^{-s/h}& U_j^{-1}\tilde\chi_j v(x,y)\\
&=\frac1{(2\pi h)^n}\iint  e^{i(x-x')\xi /h}q_M(x,\xi)v(x',y)dx'd\xi +R_Mv,
\end{aligned}
$$
with $||R_M||=\cO(h^M)$ and $q_M$ is an operator-valued symbol, acting on $L^2(\R^p)$. The symbolic calculus also gives us,
$$
q_M=q_0  +\cO(h),
$$
with,
$$
q_0(x,\xi):=U_j\tilde\chi_j\hat\Pi\left([(1+\mu{}^td\omega)^{-1}(\xi + i\nabla s)]^2+Q_1 (x)-z\right)^{-1}\hat\Pi U_j^{-1}\tilde\chi_j.
$$

In particular, taking $v=a$, and observing that $e^{-s/h}$ commutes with $U_j^{-1}\tilde\chi_j$, we obtain,
$$
\begin{aligned}
U_j\tilde\chi_je^{s/h}X_1(z)&\tilde\chi_j u(x,y)\\
&=\frac1{(2\pi h)^n}\iint e^{i(x-x')\xi /h}q_N(x,\xi)a(x',y)dx'd\xi +\cO(h^N).
\end{aligned}
$$
The stationary-phase theorem applied to this integral (with critical point $\xi =0$ and $x'=x$) immediately gives,
\be
\label{asyX1}
U_j\tilde\chi_je^{s/h}X_1(z)\tilde\chi_j u(x,y)\sim \sum_{k\geq 0}h^{k/2}\tilde b_{j,k}(x,y;z),
\ee
in $C^\infty (\Omega_j; H^2(\R^p))$, with,
$$
\tilde b_{j,0} := U_j[\hat\Pi\left(Q_1 (x)-[(1+\mu{}^td\omega)^{-1}\nabla s]^2-z\right)\hat\Pi]^{-1}\hat \Pi U_j^{-1} \tilde\chi_j^2a_{j,0},
$$
and ${\rm Supp}\, b_{j,k}\subset \cup_{0\leq\ell\leq k}{\rm Supp}\, \tilde\chi_j a_{j,\ell}$.
On the other hand, using the representation of twisted $h$-admissible operators given in \cite{MaSo2}, Definition 4.4 (that we apply to $e^{s/h}X_1(z)e^{-s/h}$), and still by the stationary phase theorem, we also have,
$$
U_j\chi_je^{s/h}X_1(z)(1-\tilde\chi_j )u(x,y)\sim 0
$$
in $C^\infty (\Omega_j; H^2(\R^p))$. As a consequence, we have,
$$
\begin{aligned}
U_j\chi_je^{s/h}X_1(z)u(x,y) &\sim U_j\chi_je^{s/h}X_1(z)\tilde\chi_j u(x,y)\\
& \sim\chi_jU_j\tilde\chi_je^{s/h}X_1(z)\tilde\chi_j u(x,y),
\end{aligned}
$$
and the result follows from (\ref{asyX1}).
\end{proof}
\begin{corollary}\sl
Let $u\in (L^2(\R^{3}))^{\oplus N}$ that can be written as $u(x;h)=a(x;h)e^{-s(x)/h}$ with $s\in C_b^\infty (\R^3;\R)$ independent of $h$, $|\nabla s(x)|^2\leq \theta(x,z)$,  $a\in (C^\infty (\R^3))^N$ admitting  an asymptotic expansion of the type,
$$
a(x;h)\sim \sum_{k\geq 0}h^{k/2}a_k(x)
$$
as $h\to 0_+$, with $a_k(x)\in (C^\infty (\R^3))^N$, ${\rm Supp}\, a_k \subset \cW_0^c$. Then, $e^{s/h}\widetilde A_\mu(z)u$ admits a semiclassical asymptotic expansion of the type,
$$
e^{s/h}\widetilde A_\mu(z)u(x;h)\sim  \sum_{k\geq 0}h^{k/2}b_k(x;z),
$$
with,
$$
b_0(x;z):=\sigma_{\mu ,s}(x,0;z)a_0=\left( z+\left( (I+\mu{}^td\omega (x))^{-1}\nabla s(x)\right)^2\right)a_0 - \cM_\mu(x)a_0,
$$
and ${\rm Supp}\, b_k(\cdot ;z) \subset \cW_0^c$.
\end{corollary}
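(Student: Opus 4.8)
The plan is to read the statement off the decomposition (\ref{pseudo}) of Theorem \ref{Thpseudo}, using the elementary fact that, since $u=ae^{-s/h}$ and $e^{s/h}$ is merely multiplication by $e^{s(x)/h}$, one has
$$
e^{s/h}\widetilde A_\mu(z)u=\left(e^{s/h}\widetilde A_\mu(z)e^{-s/h}\right)a=\Lambda_{\mu,s}(z)a+L_{\mu,s}(z)(\psi_0 a)+\Theta_{\mu,s}(z)a.
$$
It then suffices to treat the three summands separately and to show that the last two contribute only $\cO(h^\infty)$, so that the whole expansion is carried by $\Lambda_{\mu,s}(z)a$.

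For the $L_{\mu,s}$-term: the hypothesis $\supp a_k\subset\cW_0^c$ together with the fact that $\supp\psi_0$ is a compact subset of $\cW_0$ forces $\psi_0 a_k=0$ for every $k$, hence $\psi_0 a=\psi_0\bigl(a-\sum_{k<K}h^{k/2}a_k\bigr)=\cO(h^\infty)$ in $C_b^\infty$, with fixed compact support near $x=0$. Feeding this into $L_{\mu,s}(z)$ one gets $L_{\mu,s}(z)(\psi_0 a)=\cO(h^\infty)$ in $L^2(\R^3)^{\oplus N}$: in the expression (\ref{estavecpoids}) the only possibly unbounded factor is $(H_{\rm el}^\mu-\lambda_0-\delta_0)(1-\psi_1)$, but it is harmless here since $1/|x|\in L^2_{\rm loc}(\R^3)$ while $R_\mu^\pm$, $X_{0,s}$ and $R$ are bounded on the relevant spaces by Theorem \ref{Thpseudo} and its lemma. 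Likewise $\Theta_{\mu,s}(z)=\cO(h^\infty)$, and from its explicit form (\ref{Theta}) — a twisted pseudodifferential factor pre-composed with $(1-\psi_0)$ and post-composed with the first-order operators $T_{0,s},T_{1,s}$ whose coefficients live in $\{\psi_0=1\}$ — pseudolocality (one replaces $a$ by $\chi a$ with a suitable $\chi\in C_0^\infty(\R^3)$ equal to $1$ near the relevant supports) reduces $\Theta_{\mu,s}(z)a$ to the $L^2$-bounds already established, giving $\Theta_{\mu,s}(z)a=\cO(h^\infty)$. This is in fact the only genuinely delicate point of the argument: showing that the two operators $L_{\mu,s}$ and $\Theta_{\mu,s}$, which are not globally bounded and involve Coulomb singularities and twisted pseudodifferential factors, still send the merely bounded (not square-integrable) WKB symbol $a$ to an $\cO(h^\infty)$ remainder — and the tools for it are exactly the compact supports of $\psi_0$ and of the coefficients of $T_{0,s},T_{1,s}$, the local square-integrability of $1/|x|$, and pseudolocality.

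It remains to expand $\Lambda_{\mu,s}(z)a$. By Theorem \ref{Thpseudo}, $\Lambda_{\mu,s}(z)$ is an $N\times N$ matrix of semiclassical pseudodifferential operators on $\R^3$ whose symbol has a classical expansion $p\sim\sum_j h^{j/2}p_j$ with $p_0(x,\xi)=\sigma_{\mu,s}(x,\xi;z)$. Applied to $a\sim\sum_k h^{k/2}a_k$ with $a_k\in C_b^\infty(\R^3)^N$, the oscillatory integral $\Lambda_{\mu,s}(z)a(x)=(2\pi h)^{-3}\iint e^{i(x-y)\xi/h}p(x,\xi)a(y)\,dy\,d\xi$ has the single non-degenerate critical point $\{y=x,\ \xi=0\}$, so the stationary-phase expansion gives $\Lambda_{\mu,s}(z)a\sim\sum_m h^{m/2}b_m$, where $b_m$ is a finite sum of terms $c_\alpha\,(\partial_\xi^\alpha p_j)(x,0)\,\partial_x^\alpha a_\ell(x)$ with $2|\alpha|+j+\ell=m$ and $c_\alpha$ universal constants. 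In particular $b_0=p_0(x,0)a_0=\sigma_{\mu,s}(x,0;z)a_0=\bigl(z+((I+\mu{}^td\omega(x))^{-1}\nabla s(x))^2\bigr)a_0-\cM_\mu(x)a_0$, and each $b_m$ is supported in $\bigcup_{\ell\le m}\supp\partial_x^\alpha a_\ell\subset\bigcup_{\ell\le m}\supp a_\ell\subset\cW_0^c$. Adding back the two $\cO(h^\infty)$ contributions from the previous step leaves this expansion unchanged, which is the assertion.
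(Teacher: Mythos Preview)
Your overall strategy---use the decomposition (\ref{pseudo}) and treat the three pieces separately, handling $\Lambda_{\mu,s}(z)a$ by stationary phase---is the same as the paper's, and that part is fine.

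The gap is in your treatment of $\Theta_{\mu,s}(z)a$. You invoke the $L^2$ operator bound $\Theta_{\mu,s}(z)=\cO(h^\infty)$ from (\ref{estavecpoids}), but $a$ is not in $L^2(\R^3)^{\oplus N}$: it is only smooth and bounded. Your cutoff fix (replace $a$ by $\chi a$) does not close the gap, because the tail $(1-\chi)a$ is still not square-integrable, and because $\Theta_{\mu,s}$ is not simply ``a twisted pseudodifferential factor'': through the $B_k'$ it contains $(1+Y)^{-1}$, and $Y$ involves the non-pseudodifferential resolvent $X_{0,s}$, so ordinary pseudolocality is not available for the whole operator. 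Finally, even if you could extract an $L^2$ bound, the stationary-phase expansion of $\Lambda_{\mu,s}(z)a$ is a $C^\infty$ expansion with non-$L^2$ coefficients $b_k$, so an $L^2$ remainder would not combine with it to give an asymptotic expansion in any common topology.

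This is precisely why the statement is placed as a Corollary of Proposition \ref{asyX1} rather than a direct consequence of Theorem \ref{Thpseudo}. The paper first applies Proposition \ref{asyX1} (together with (\ref{wkmutwist})) to the WKB input $M_{1,s}R_\mu^-\varphi_1 a$, obtaining a full $C^\infty$ asymptotic expansion of $U_j\chi_j X_{1,s}M_{1,s}R_\mu^-\varphi_1 a$ whose coefficients are supported in $\cW_0^c\times\R^p$; in particular $\psi_0$ times this quantity is $\cO(h^\infty)$ with all $x$-derivatives. Only then does the paper unwind the structure of $\Theta_{\mu,s}$ from (\ref{Theta}), using that $X_0$ always appears in the combination $X_0T_0$ (so it is fed only compactly supported data away from the singularity), to conclude $\Theta_{\mu,s}(z)a=\cO(h^\infty)$ in $C^\infty$. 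You skip this step entirely, and it is the substance of the corollary.
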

\begin{proof}
Indeed, by Proposition \ref{asyX1} and (\ref{wkmutwist}), we see that, for any $j=0,\dots,r$ and $\chi_j\in C_0^\infty (\Omega_j)$, we have,
$$
U_j\chi_jX_{1,s}M_{1,s}R_\mu^-\varphi_1 a(x,y;h)\sim \sum_{k\geq 0}h^{k/2}b_{j,k}(x,y;z),
$$
with ${\rm Supp}\, b_{j,k} \subset \cW_0^c\times \R^p$. In particular,
\be
\label{pres0}
\psi_0 U_j\chi_jX_{1,s}M_{1,s}R_\mu^-\varphi_1 a(x,y;h) =\cO(h^\infty),
\ee
together with all its derivatives in $x$. Now, by (\ref{pseudo}), and using that $\psi_0u=0$, we have,
$$
e^{s(x)/h}\widetilde A_\mu(z)u= \Lambda_{\mu,s} (z)a  +\Theta_{\mu,s}(z)a,
$$
where $\Theta$ is given in (\ref{Theta}), and is of the form
$$
\Theta_{\mu,s}(z)=L_{\mu,s}'(z)\varphi_1X_{1,s}M_{1,s}R_\mu^-\varphi_1(1-\psi_0)
$$
Looking more carefully at the expression of $L_{\mu,s}'(z)$, we see that it involves only twisted pseudodifferential operators, except $X_0$ that always appears on the form $X_0T_0$ (this is due to the expressions of the operators $Y$ and $Y_3$ given in (\ref{defbasic}), and to the fact that $L_{\mu,s}'(z)$ does not involve $Y_1$, $Y_2$ nor $Y_4$). Therefore, since the coefficients of $T_0$ are supported in $\{ x\not =0\}$, and the same holds for $\varphi_1$, one can deduce from (\ref{pres0}) and the general theory of \cite{MaSo2} that one has,
$$
\Theta_{\mu,s}(z)a=\cO(h^\infty),
$$
together with all its derivatives. As a consequence, we obtain,
$$
e^{s(x)/h}\widetilde A_\mu(z)u\sim \Lambda_{\mu,s} (z)a,
$$
and the result follows from the fact that $\Lambda_{\mu,s}$ is a pseudodifferential operator with principal symbol $\sigma_{\mu,s}(x,\xi ;z)$, and from a standard stationary-phase expansion.
\end{proof}

\section{Location of resonances}
\label{sectellipt}

In order to determine the resonances of $H$ near $\lambda_0$, we see on (\ref{feschb}) that it is necessary to know the spectrum of $\widetilde A_\mu(z)$ near $0$. But we see on (\ref{pseudo}) with $s=0$ that $\widetilde A_\mu(z)$ is not really a pseudodifferential operator (not even modulo ${\mathcal O}(h^\infty)$), because of the term $L_{\mu,0}(z)\psi_0$ in its expression. However, because this term is  localised in the region where $\re \Lambda_{\mu,0}\leq -\delta_0+Ch$, and has a real part $\leq C'h$ (with $C,C'$ positive constants), there exists a general method, due to Helffer and Sj\"strand \cite{HeSj2}, to compare the eigenvalues of $\widetilde A_\mu(z)$ to those of,
$$
\hat  A_\mu(z):= \Lambda_{\mu,0}(z)+\Theta_{\mu,0}(z)=\widetilde A_\mu(z)- L_{\mu,0}(z).
$$
Assume for instance that $\hat  A_\mu(z)$ admits an isolated eigenvalue $\rho_0=\rho_0(z)$ close to $0$ (with normalised eigenfunction $u_0$), call $\Pi_0$ the spectral projector of $\hat  A_\mu(z)$ associated with $\rho_0$, $\hat\Pi_0:= 1-\Pi_0$, and assume that the reduced resolvent $\hat\Pi_0(\hat  A_\mu(z)-\rho)^{-1}\hat\Pi_0$ is not exponentially large for $\rho$ in a small ($h$-dependent) neighbourhood of $0$. Then, for $\rho$ is his neighbourhood, one considers the two Grushin problems,
$$
{\mathcal G_0}(\rho):=\left(\begin{array}{cc}
\hat A_\mu-\rho & u_0\\
\la \cdot,u_0\ra & 0
\end{array}
\right);
$$
$$
{\mathcal G}(\rho):=\left(\begin{array}{cc}
\widetilde A_\mu-\rho & \chi_1u_0\\
\la \cdot,u_0\ra & 0
\end{array}
\right),
$$
where we have omitted the variable $z$, and where $\chi_1\in C^\infty (\R^n)$ is such that $\chi_1(x)=1$ for $|x|\geq 3\delta_1$, $\chi_1(x)=0$ for $|x|\leq 2\delta_1$. Then, by construction ${\mathcal G_0}(\rho)$ is invertible, and we denote its inverse by,
$$
{\mathcal G_0}(\rho)^{-1}=\left(\begin{array}{cc}
E_0(\rho) & E_0^+(\rho)\\
E_0^-(\rho) & E_0^{-+}(\rho)
\end{array}
\right),
$$
then a candidate for the inverse of ${\mathcal G}(z)$ is given by (see also \cite{HeSj2}, Formula (9.22)),
$$
{\mathcal F}(\rho):=\left(\begin{array}{cc}
\chi_1 E_0(\rho)\chi_2+(B-\rho)^{-1}(1-\chi_2) & \chi_1E_0^+(\rho)\\
E_0^-(\rho) & E_0^{-+}(\rho)
\end{array}
\right),
$$
where $\chi_2\in C^\infty (\R^n)$ is such that $\chi_1(x)=1$ for $|x|\geq 4\delta_1$, $\chi_1(x)=0$ for $|x|\leq 3\delta_1$, and $B$ is defined as,
$$
B:=\widetilde A_\mu-C\chi_3,
$$
with $\chi_3\in C^\infty (\R^n)$ such that $\chi_1(x)=1$ near $\{ \re {\mathcal M}_\mu (x)\leq \lambda_0\}$,  $\chi_1(x)=0$ for $|x|\leq \delta_2$ (where $\delta_2 >4\delta_1$), and $C>0$ is taken sufficiently large in order that $\re B\leq -\delta_0$. 

Indeed, taking advantage of the fact that $u_0$ is exponentially small near $0$, and that, thanks to (\ref{estavecpoids}), the operators satisfy the same type of estimates as in \cite{HeSj2}, Proposition 9.3, we see as in \cite{HeSj2}, Section 9, that one has,
$$
{\mathcal G}(\rho){\mathcal F}(\rho)=I+{\mathcal O}(e^{-\alpha_0 /h}),
$$
where $\alpha_0>0$ mainly depends on the distance between the support of $\chi_3$ and 0. By a similar procedure, an approximate left-inverse of ${\mathcal G}(\rho)$ can also be found, and as in \cite{HeSj2}, this permits to show the existence of an eigenvalue $\rho_1(z)$ of $\widetilde A_\mu(z)$ exponentially close to $\rho_0(z)$. Finally, the corresponding resonance of $H$ is obtained by solving the equation $\rho_1(z)=0$, and we see on Theorem \ref{Thpseudo} that this leads to a unique value $z_1$ close to $\lambda_0$.

As in \cite{HeSj2}, this argument can also be extended to a set of resonances of $\hat A_\mu$ separated from the rest of its spectrum.

\section{Applications}
\label{sect7}
In this section we discuss some applications to cases where resonances can be located quite well, and estimates on their widths can be obtained. We do not give details on the proofs (that may result rather long) but just give indications on them.

\subsection{Shape resonances}

In this subsection we assume that $N=1$, and that the first electronic level $\lambda_1(x)$ presents, at some energy $\lambda_0$, the geometric situation of a well in an island, as described in \cite{HeSj2}, that is,
\begin{itemize}
\item There is an open bounded connected set $\ddot O$ (the island) and a compact set $U\subset \ddot O$ (the well) such that $\lambda_1\leq\lambda_0$ on $U\cup \ddot O^c$, $\lambda_1 > \lambda_0$ on $\ddot O\backslash U$;
\item The set $\ddot O^c\times \R^3$ is non trapping for the Hamiltonian $\xi^2 + \lambda_1(x)$ at energy level $\lambda_0$;
\item $\lambda_1(x)$ admits a limit $\lambda_1^\infty < \lambda_0$ as $|x|\to\infty$, $x$ in a complex sector of the form $\{ |\im x| < \delta |\re x|\}$ with $\delta >0$.
\end{itemize}
\begin{remark}\sl Note that the limit $|x|\to\infty$ in $H_{\rm el}(x)$ can be deduced, by a change of variable, from the semiclassical limit of $-\tilde h^2\Delta_y + W(y)$, with $\tilde h:= |x|^{-\frac12}$, and $W(y):=\alpha + \sum_{j} (\alpha_j^+ |y_j+\theta|^{-1} + \alpha_j^- |y_j-\theta|^{-1}) + \sum_{j,k}\alpha_{jk}|y_j-y_k|^{-1}$, where $\theta$ is any element of the unit sphere of $\R^3$. In particular, the limit $\lambda_1^\infty$ can be seen to exist for $p=1$, and to be equal to the smallest between the first eigenvalue of $-\Delta_y + \alpha_1^+|y|^{-1}$, and that of $-\Delta_y + \alpha_1^-|y|^{-1}$. The case $p\geq 2$ seems to be more delicate to treat, but it is reasonable to think that the limit should exist, too.
\end{remark}
In this situation, we can adapt some of the arguments of \cite{HeSj2} (see also \cite{LaMa} for a more simplified version) to the operator $\Lambda_{\mu,0}(z)$ given in (\ref{pseudo}) for $s=0$. Moreover, the properties of $L_{\mu,s}$ and $L'_{\mu,s}$ (in particular (\ref{propLmu})) allows us to extend the Agmon estimates appearing in \cite{HeSj2} to the whole operator $\widetilde A_\mu (z)$. 

In addition, the non degeneracy of $\lambda_1(x)$ and the rotational symmetry of $H_{\rm el}(x)$ (namely, that rotating simultaneously $x$ and $y_j$ ($j=1,\dots,p$) with the same rotation of $\R^3$, leaves $H_{\rm el}(x)$ unchanged), one can see as in \cite{KMSW}, Theorem 2.1 (see also \cite{GKMSS}) that one can construct the functions $ w_{k,\mu}$ in such a way that $\widetilde A_\mu (z)$ commutes with the operator of angular momentum with respect to $x$. In particular, working in polar coordinates $(r,\theta)\in \R_+\times S^2$, denoting by $Y_{\ell,m}$ the spherical harmonic of degree $\ell$ and order $m$, and $\cH_\ell$ the subspace of $L^2(S^2)$ spanned by $\{Y_{\ell,m}\, ;\, |m|\leq \ell\}$, one can decompose $\widetilde A_\mu (z)$ as (see, e.g., \cite{So}),
$$
\widetilde A_\mu (z) =\bigoplus_{\ell \geq 0}\widetilde A_\mu^{\ell} (z)\otimes {\mathbf 1}_{{\mathcal H}_\ell},
$$
where  the action of $\widetilde A_\mu^ {\ell} (z)$ on $L^2(\R_+, r^2dr)$ is defined by,
$$
\widetilde A_\mu^ {\ell} (z)\alpha (r):=\la \widetilde A_\mu (z)\alpha (r)Y_{\ell,m} (\theta), Y_{\ell,m}(\theta)\ra_{L^2(S^2)},
$$
where actually the right-hand side does not depend on $m$.

Then, taking the cutoff functions $\varphi_0, \varphi_1, \psi_0, \psi_1$ radial, we see on (\ref{pseudo}) (with $s=0$) that $\widetilde A_\mu^{\ell} (z)$ can be written as,
$$
\widetilde A_\mu^{\ell} (z)= z - F_\mu^{\ell}(z) - S_\mu^{\ell}(z)\psi_0-T_\mu^{\ell}(z),
$$
with $T_\mu^{\ell}(z)={\mathcal O}(h^\infty)$, $\re S_\mu^{\ell}(z)\leq Ch$, and $ F_\mu^{\ell}(z)$ given by,
$$
F_\mu^{\ell}(z)u (r):=\la F_\mu (z)u (r)Y_{\ell,m} (\theta), Y_{\ell,m}(\theta)\ra_{L^2(S^2)},
$$
where $F_\mu (z)$ is a rotational-invariant semiclassical pseudodifferential operator on $L^2(\R^3)$, with  symbol $f_\mu (x,\xi;z)$ satisfying,
$$
f_\mu (x,\xi )=[(I+\mu^td\omega(x))^{-1}\xi]^2+\lambda_\mu (x)+{\mathcal O}(h^2),
$$
where $\lambda_\mu$ is a smooth function of $|x|$ such that,
$$
\begin{aligned}
& \lambda_\mu (|x|)=\lambda_1(x+\mu\omega (x))\quad \mbox{for } |x|\geq 2\delta_1;\\
& \re \lambda_\mu (x)\geq \lambda_0+\delta_0\quad \mbox{for } |x|\leq 2\delta_1.
\end{aligned}
$$
Setting $v(r)=ru(r)$, this leads to a problem on $L^2(\R^+, dr)$ with Dirichlet boundary condition at 0, and with principal part,
$$
P_\mu = -h^2\left(\frac{d}{(1+\tilde\omega'(r))dr}\right)^2+\lambda_\mu(r).
$$
Then, thanks also to (\ref{estavecpoids}), one can adapt the arguments of \cite{HeSj2}, Section 9, to the operator $F_\mu^{\ell}(z) +T_\mu^{\ell}(z)$, and, by the method described in Section \ref{sectellipt} one can show the existence of resonances of $H$ near $\lambda_0$, with exponentially small widths.

\subsection{Microlocal tunneling} Here we assume $N=2$, and that the second electronic level $\lambda_2(x)$ forms a well at some energy $\lambda_0$, while the first one $\lambda_1(x)$ is non trapping at $\lambda_0$. More precisely, we assume that $\lambda_2(x)$ is simple (so that $\lambda_1$ and $\lambda_2$ are automatically rotationally invariant), and that,

\begin{itemize}
\item The set $U:=\{ \lambda_2\leq\lambda_0\}$ is compact;
\item $\lambda_2(x)$ admits a  limit $\lambda_2^\infty  >\lambda_0$ as $|x|\to\infty$, $x$ in a complex sector of the form $\Gamma_\delta:=\{ |\im x| < \delta |\re x|\}$ with $\delta >0$;
\item $\lambda_1(x)$ admits a  limit $\lambda_1^\infty  <\lambda_0$ as $|x|\to\infty$, $x\in \Gamma_\delta$;
\item The set $\{ \lambda_1(x)=\lambda_0\}$ is reduced to a single point. 
\end{itemize}

In this case, using again the rotational symmetry and the simplicity of $\lambda_1$ and $\lambda_2$, the operator $\widetilde A_\mu (z)$ can be written as,
$$
\widetilde A_\mu (z) =\bigoplus_{\ell \geq 0}\widetilde A_\mu^{\ell} (z)\otimes {\mathbf 1}_{{\mathcal H}_\ell},
$$
where, as before, $\cH_\ell$ is the subspace of $L^2(S^2)$ spanned by the spherical harmonics $Y_{\ell,m}$ ($-\ell\leq m\leq \ell$), and  $\widetilde A_\mu^ {\ell} (z)$ is a $2\times2$ matrix  acting on $L^2(\R_+, r^2dr)\oplus L^2(\R_+, r^2dr)$, of the form,
$$
\widetilde A_\mu^{\ell} (z)= z - F_\mu^{\ell}(z) - S_\mu^{\ell}(z)\psi_0-T_\mu^{\ell}(z),
$$
still with $T_\mu^{\ell}(z)={\mathcal O}(h^\infty)$, $\re S_\mu^{\ell}(z)\leq Ch$, and $ F_\mu^{\ell}(z)$ is given by,
$$
F_\mu^{\ell}(z)u (r):=\la F_\mu (z)u (r)Y_{\ell,m} (\theta), Y_{\ell,m}(\theta)\ra_{L^2(S^2)},
$$
where $F_\mu (z)$ is a rotational-invariant $2\times 2$ matrix of semiclassical pseudodifferential operators, with  symbol $f_\mu (x,\xi;z)$ satisfying,
$$
f_\mu (x,\xi )=[(I+\mu^td\omega(x))^{-1}\xi]^2{\mathbf I}_2+\Lambda_\mu (x)+{\mathcal O}(h^2),
$$
where $\Lambda_\mu$ is a $2\times 2$ matrix-valued smooth function such that,
$$
\begin{aligned}
& \Lambda_\mu (x)=\left(\begin{array}{cc}
\lambda_1(x+\mu\omega (x)) & 0\\
0 & \lambda_2(x+\mu\omega (x))
\end{array}\right)
\quad \mbox{for } |x|\geq 2\delta_1;\\
& \re \Lambda_\mu (x)\geq \lambda_0+\delta_0\quad \mbox{for } |x|\leq 2\delta_1.
\end{aligned}
$$

In this situation, one can work in the same spirit as in \cite{Ma3} (but in a simpler way, here, since the operator is already distorted, and thus only compactly supported weights are necessary) and prove the existence of resonances near $\lambda_0$ with exponentially small widths as $h\to 0_+$. Alternatively, one can also adapt the arguments of \cite{FMW}, where particular solutions to such a system are constructed.

\subsection{Molecular predissociation} In this subsection we take $N=3$, and we assume that the second and third level cross on some disc $\{ |x| =r_0\}$. More precisely, we assume that the first 3 eigenvalues can be re-indexed in such a way that they become smooth functions of $r=|x|$, and that they satisfy,
\begin{itemize}
\item The set $U:=\{ r>0\, ;\, \lambda_2(r)\leq\lambda_0\}$ is a bounded interval $[r_1,r_2]$;
\item $\lambda_2(r)$ admits a  limit $\lambda_2^\infty  >\lambda_0$ as $r\to\infty$, $r$ in a complex sector of the form $\Gamma_\delta:=\{ |\im r| < \delta |\re r|\}$ with $\delta >0$;
\item $\lambda_3(r)$ admits a  limit $\lambda_3^\infty  <\lambda_0$ as $r\to\infty$, $r\in \Gamma_\delta$;
\item $\lambda_1(r)$ admits a  limit $\lambda_1^\infty  <\lambda_3^\infty$ as $r\to\infty$, $r\in \Gamma_\delta$;
\item The set $\{ \lambda_3(r)=\lambda_0\}$ is reduced to a single point $\{ r_3\}$ with $r_3>r_2$;
\item The set $\{ \lambda_1(r)=\lambda_0\}$ is reduced to a single point belonging to $(0,r_1)$;
\item For all $r>0$, $\lambda_1(r)<\min\{ \lambda_2(r), \lambda_3(r)\}$.
\end{itemize}
Then, following \cite{Kl}, one can prove the existence of resonances near $\lambda_0$ with exponentially small widths as $h\to 0_+$.

\subsection{Crossing levels} We take again $N=3$, and we assume that the second and third level cross on some disc $\{ |x| =r_0\}$. More precisely, we assume that the first 3 eigenvalues can be re-indexed in such a way that they become smooth functions of $r=|x|$, and that they satisfy,
\begin{itemize}
\item The set $U:=\{ r>0\, ;\, \lambda_2(r)\leq\lambda_0\}$ is a bounded interval $[r_1,r_2]$ with $r_1<r_2$ and $\lambda_2'(r_j)\not= 0$ for $j=1,2$;
\item $\lambda_2(r)$ admits a  limit $\lambda_2^\infty  >\lambda_0$ as $r\to\infty$, $r$ in a complex sector of the form $\Gamma_\delta:=\{ |\im r| < \delta |\re r|\}$ with $\delta >0$;
\item $\lambda_3(r)$ admits a  limit $\lambda_3^\infty  <\lambda_0$ as $r\to\infty$, $r\in \Gamma_\delta$;
\item $\lambda_1(r)$ admits a  limit $\lambda_1^\infty  <\lambda_3^\infty$ as $r\to\infty$, $r\in \Gamma_\delta$;
\item The set $\{ \lambda_3(r)=\lambda_0\}$ is reduced to $\{ r_2\}$;
\item The set $\{ \lambda_1(r)=\lambda_0\}$ is reduced to a single point belonging to $(0,r_1)$;
\item For all $r>0$, $\lambda_1(r)<\min\{ \lambda_2(r), \lambda_3(r)\}$.
\end{itemize}
Then, one can adapt the arguments of \cite{FMW} (see, in particular, Remarks 2.2 and  8.8 in \cite{FMW}) and prove the existence of resonances at a distance ${\mathcal O}(h^{2/3})$ of $\lambda_0$, with  widths of size $h^{5/3}$.


\end{document}